\newtheorem{theorem}{Theorem}
\newtheorem{corollary}{Corollary}
\newtheorem{proposition}{Proposition}
\theoremstyle{definition}
\newtheorem{remark}{Remark}
\DeclareMathOperator{\Cov}{Cov}
\newcommand{\TV}{{\mathrm{TV}}}
\newcommand{\E}{{\mathrm{E}}}
\newcommand{\pcite}[1]{\citeauthor{#1}'s \citeyearpar{#1}}
\newcommand{\sX}{\textsf{X}}
\newcommand{\Sbb}{{\mathbb{S}}}
\begin{document}

\title{Convergence of position-dependent MALA with application to conditional simulation in GLMMs}

\author{Vivekananda Roy and
  Lijin Zhang\\
 Department of Statistics, Iowa State University, USA}

\date{}

\maketitle              

\begin{abstract}
  We establish conditions under which Metropolis-Hastings
  (MH) algorithms with a position-dependent proposal covariance matrix
  will or will not have the geometric rate of convergence. Some of the
  diffusions based MH algorithms like the Metropolis adjusted Langevin
  algorithm (MALA) and the pre-conditioned MALA (PCMALA) have a
  position-independent proposal variance. Whereas, for other modern
  variants of MALA like the manifold MALA (MMALA) that adapt to the
  geometry of the target distributions, the proposal covariance matrix
  changes in every iteration. Thus, we provide conditions for
  geometric ergodicity of different variations of the Langevin
  algorithms. These results have important practical implications as
  these provide crucial justification for the use of asymptotically
  valid Monte Carlo standard errors for Markov chain based
  estimates. The general conditions are verified in the context of
  conditional simulation from the two most popular generalized linear
  mixed models (GLMMs), namely the binomial GLMM with the logit link
  and the Poisson GLMM with the log link. Empirical comparison in the
  framework of some spatial GLMMs shows that the computationally less
  expensive PCMALA with an appropriately chosen pre-conditioning
  matrix may outperform the MMALA.
\end{abstract}



    \noindent {\it Key words:} Drift conditions; Geometric ergodicity; Langevin diffusion; Markov chain; Metropolis-Hastings; Mixed models

\section{Introduction}
\label{sec:int}
In physics, statistics, and several other disciplines one often deals
with a complex probability density $f(x)$ on $\mathbb{R}^d$ that is
available only up to a normalizing constant. Generally, the goal is to
estimate $\E_f [g] :=\int_{\mathbb{R}^d} g(x) f(x) dx$ for some real valued
function $g$. Markov chain Monte Carlo (MCMC) is the most popular
method for sampling from such a $f$ and for providing a Monte Carlo
estimate of $E_f [g]$ \citep{robe:case:2004}.  In MCMC, a Markov chain
$\{X_n\}$, which has $f$ as its stationary density, is run for
a certain number of iterations, and $\E_f [g]$ is estimated by the sample
average $\bar{g}_n := \sum_{i=1}^n g(X_i)/n$. Among the different
MCMC algorithms, Metropolis-Hastings (MH) algorithms
\citep{metr:nich:1953, hast:1970} are predominant. In MH
algorithms, given the current state $x$, a proposal $y$ is drawn from
a density $q(x, y)$, which is then accepted with a certain probability. The accept-reject step guarantees reversibility of the Markov
chain with respect to the target $f$, and, in turn, ensures
stationarity.  Besides, the acceptance probability generally does not
involve the unknown normalizing constant in $f$, making the implementation of
MH algorithms practically feasible.

A popular MH algorithm is the random walk Metropolis (RWM) where the
proposal density is $N (x, hI_d)$, the normal density centered at the
current state $x$ and with the covariance matrix $hI_d$ for some
$h>0$. A nice feature of the RWM is that the acceptance probability
can be adjusted by choosing the step-size (proposal variance) $h$
accordingly. Indeed, lower step-size results in a higher acceptance
probability but then the RWM chain takes longer to move around the
space. Therefore, in higher dimensions, that is when $d$ is large, the
Metropolis adjusted Langevin algorithm (MALA)
\citep{ross:doll:frie:1978, besa:1994, robe:twee:1996b}, which employs
the gradient of log of the target distribution, is developed to
achieve faster mixing. Since the mean of the proposal density
$N(x+h\nabla\log f(x)/2, hI_d)$ of the MALA is governed by the
gradient information, it is likely to make moves in the directions in
which $f$ is increasing. Thus, large proposals can be accepted with a
higher probability, leading to high mixing of the Markov chain. On the
other hand, the proposal density in the RWM does not make use of the
structure of the target density. Superiority of the MALA over the RWM
in terms of mixing time is demonstrated by \cite{robe:rose:1998}
\cite[see also][]{chri:robe:rose:2005, dwiv:chen:wain:yu:2019,
  chen:dwiv:wain:yu:2020,lee:shen:2020,wu:schm:2021}.

However, MALA may be inefficient when the coordinates of $x$ are
highly correlated, and have largely differing marginal variances. In such situations, the step-size is compromised to accommodate the
coordinates with the smallest variance. Such a situation arises when
modeling spatially correlated data. Spatial models take the
correlation of different locations into consideration, usually, the
closer the two locations, the more similarity and the higher correlations they have.
The pre-conditioned MALA (PCMALA) \citep{robe:stra:2007} is introduced to
circumvent these issues by multiplying a covariance matrix to the
gradient of log of the target density. The proposal density
of the PCMALA is $N(x+hG\nabla\log f(x)/2, hG)$, while the selection of
an appropriate covariance matrix requires further study. Without the Metropolis step,
the MALA and the PCMALA degenerate to the unadjusted Langevin algorithm (ULA)
\citep{pari:gior:1981, gren:mill:1994, robe:twee:1996b} and the pre-conditioned ULA (PCULA), respectively, which although
might converge to undesired distributions, require less
computational time. 

By taking into account the geometry of the target distribution in the
selection of step-sizes, efficient versions of MALA can be formed that
adapt to the characteristics of the target. Indeed, using ideas from
both Riemannian and information geometry, \cite{giro:cald:2011}
propose a generalization of MALA, called the manifold MALA
(MMALA). MMALA is constructed taking into account the natural geometry
of the target density and considering a Langevin diffusion on a
Riemann manifold. In the MMALA, the covariance matrix $G$, unlike the
PCMALA, changes in every iteration. More recently,
\cite{xifa:sher:2014} propose the position-dependent MALA
(PMALA). There are other works in the literature \cite[see
e.g][]{haar:saks:tamm:2001, robe:rose:2009}, which consider RWM
algorithms where the Gaussian proposal distribution is centered at the
current state and the covariance matrix depends on the current or a
finite number of previous states.

A Harris ergodic Markov chain will converge to the target
distribution, and $\E_f [g]$ can be consistently estimated by the
sample mean $\bar{g}_n$ \citep{meyn:twee:1993}. On the other hand, in
practice, it is important to ascertain the errors associated with the
estimate $\bar{g}_n$. Establishing geometric ergodicity of a Markov
chain is the most standard method for guaranteeing a central limit
theorem (CLT) for $\bar{g}_n$ and finding its standard errors. Thus,
the geometric rate of convergence for Markov chains is highly
desired. Furthermore, a non-geometrically ergodic chain may sample
heavily from the tails instead of the center of the distribution,
leading to instability of the Monte Carlo estimation
\citep{robe:twee:1996}.  The main contribution of this article is that
it establishes conditions under which geometric ergodicity will and
will not hold for the position-dependent MALA. Indeed, we provide
these results for MH algorithms where the normal proposal density has
a general mean function $c(x)$ and a covariance matrix $G(x)$
depending on the current position $x$ of the Markov chain. As special
cases, these results also hold for the MMALA, the PMALA, and the
PCMALA. We also provide conditions guaranteeing geometric ergodicity
of the PCULA. Our results will help practitioners implementing these
MCMC algorithms to choose appropriate step-sizes ensuring the
geometric convergence rates. Previously, \cite{robe:twee:1996} derive
conditions under which the MALA and ULA chains are geometrically
ergodic (GE). However, in the literature, there is no results available on
convergence analysis of position dependent MALA chains. Recently,
\cite{livi:2021} considers ergodicity properties of the RWM algorithm
with a position-dependent proposal variance. Some of these previously
mentioned results are valid only for $d=1$. It is known that the
Hamiltonian Monte Carlo (HMC) algorithm with exactly one leapfrog step
boils down to the MALA \citep{neal:2011}. \cite{livi:beta:byrn:2019}
establish geometric ergodicity of the HMC when the `mass matrix' in
the `kinetic energy' is a fixed matrix. On the other hand, in our
geometric convergence results, the pre-conditioning covariance matrix
is allowed to vary with the current position of the Markov chains.

Generalized linear mixed models (GLMMs) are often used for analyzing
correlated non-Gaussian data. Spatial generalized linear mixed models
(SGLMMs) are GLMMs where the correlated random effects form the
underlying Gaussian random fields. SGLMMs are useful for modeling
spatially correlated binomial and count data. Simulation from the
random effects given the observations from a GLMM or a SGLMM is
important for prediction and the Monte Carlo maximum likelihood
estimation \citep{digg:tawn:moye:1998, geye:1994a}. Langevin
algorithms have been previously used for making inference in the
SGLMMs \citep{chri:moll:waag:2001, chri:robe:rose:2005,
  chri:robe:skol:2006}. Another contribution of this paper is to
verify our general conditions for geometric ergodicity of different
versions of the MALA and the ULA for conditional simulation in the GLMMs. In
particular, using our general sufficient conditions mentioned before,
we establish the geometric rate of convergence of different versions
of the MALA with appropriately chosen step-sizes for the binomial GLMM. On
the other hand, our general necessary conditions are used to show that
the PCMALA is {\it not} geometrically ergodic for the Poisson GLMM. We
also undertake empirical comparisons of the before mentioned
algorithms in the context of simulated data from high dimensional
SGLMMs. In the numerical examples we observe that the PCMALA compares
favorably with the computationally expensive PMALA. Avoiding expensive
computation of derivatives repeatedly in each iteration, the PCMALA is
computationally efficient. On the other hand, computational cost for
the MMALA and other MCMC algorithms with a position-dependent proposal
variance may not scale favorably with increasing dimensions as noted
in \cite{giro:cald:2011}. \cite{giro:cald:2011} compare the MMALA with
the `simplified MMALA' where the metric tensor is a locally constant
in the context of several examples and they observe that, although the
simplified MMALA is `computationally much less expensive', it is less
efficient.  \cite{giro:cald:2011} argue that `a global level
of pre-conditioning may be inappropriate for differing transient and
stationary regimes', however, we observe that for the SGLMM examples considered
here, the PCMALA with a well-chosen (suggested by
\cite{giro:cald:2011} themselves) pre-conditioning matrix can
outperform the PMALA and the MMALA for chains started either at the
center or away from the mode.

The rest of the paper is organized as follows. Section~\ref{sec:gmala}
contains a brief review of the MALA and its different variants. After
discussing some basic results on convergence of Markov chains in
Section~\ref{sec:mcbg}, we provide our main results on MH algorithms
with a position-dependent proposal variance in
Section~\ref{sec:gmalage}.  Section~\ref{sec:gegula} contains some
convergence results for the PCULA. Our general convergence results for
variations of the MALA are demonstrated for GLMMs in
Section~\ref{sec:sglmm}. This section also contains empirical
comparisons between different variants of the Langevin algorithms in the context
of conditional simulation for the SGLMMs. Some concluding remarks appear in 
Section~\ref{sec:disc}. Finally, most of the proofs and some numerical results are given in the Supplement.
The sections in the supplement document are referenced here with the prefix `S'.

\section{Metropolis adjusted Langevin algorithms}
\label{sec:gmala}
MALA is a discrete time MH Markov chain
based on the Langevin diffusion $X_t$ defined as
\begin{equation}
  \label{eq:lang}
    dX_t=(1/2)\nabla\log f(X_t)dt+dW_t,  
\end{equation}
where $W_t$ is the $d-$dimensional standard Brownian motion. Although
$f$ is stationary for $X_t$ in \eqref{eq:lang}, simple
discretizations of it say,
$X_n = X_{n-1}+ h\nabla\log f(X_{n-1})/2 + \sqrt{h} \epsilon_n$ for a
chosen step-size $h$ with $\epsilon_n \stackrel{iid}{\sim} N(0, I_d)$
can fail to maintain the stationarity. This is why, in the MALA, an MH
accept-reject step is introduced where, in each iteration, the
proposal $X_n$ drawn from
$N(x_{n-1}+ h\nabla\log f(x_{n-1})/2, h I_d)$ is only accepted with
probability
\begin{equation}
  \label{eq:accpro}
      \alpha(X_{n-1},X_n)=1\wedge\frac{f(X_n)q(X_{n},X_{n-1})}{f(X_{n-1})q(X_{n-1},X_{n})},
\end{equation}
where the proposal
density $q(x,y)$ is the $N(x+h\nabla\log f(x)/2, hI_d)$ density evaluated at $y$. Several extensions of the MALA have been proposed in
the literature. These variants are based on different stochastic
differential equations $d X_t = b(X_t) dt + \sigma(X_t) dW_t$
with a certain drift vector $b(x)$ and a volatility matrix $\sigma(x)$. The Fokker-Planck equation given by 
\begin{equation}
  \label{eq:fokk}
  \frac{\partial}{\partial t} u(x, t) = -\sum_{i} \frac{\partial}{\partial x_i} [b_i (x) u(x, t)] + \sum_{ij} \frac{\partial^2}{\partial x_i \partial x_j} [D_{ij} (x) u(x, t)]
\end{equation}
describes the evolution of the pdf $u(x, t)$
of $X_t$. Here, $D(x) = \sigma(x) \sigma(x)^{\top}/2$ is the diffusion
coefficient. If $u(x, t) = f(x) \;\forall t$ then the process
$\{ X_t\}_{t \ge 0}$ is stationary with the invariant density $f$. Setting $u(x, t) = f(x), b(x)= \nabla\log f(x)/2$ and $\sigma(x) = I$, it can be seen that \eqref{eq:fokk} holds for \eqref{eq:lang}. A generalization of \eqref{eq:lang} still satisfying \eqref{eq:fokk} is the diffusion
\begin{equation}
  \label{eq:pclang}
    dX_t=(1/2)G\nabla\log f(X_t)dt+\sqrt{G} dW_t,  
\end{equation}
for a positive definite matrix $G$. The corresponding discrete time MH
chain with proposal density $N(x+hG\nabla\log f(x)/2, hG)$ is known as
the pre-conditioned MALA (PCMALA) \citep{robe:stra:2002}. By choosing $G$
appropriately, the PCMALA can be well suited to situations where
coordinates of the random vector following $f$ are highly
correlated, and have different marginal variances. In
Section~\ref{sec:sglmm}, we discuss several choices of $G$.

In both \eqref{eq:lang} and \eqref{eq:pclang} the volatility matrix is
constant. \cite{giro:cald:2011} and \cite{xifa:sher:2014} propose
variants of \eqref{eq:lang} with a position-dependent volatility
matrix. The MH proposal of \pcite{xifa:sher:2014} position-dependent MALA (PMALA) is driven by
\begin{equation}
  \label{eq:plang}
    dX_t=(1/2)G(X_t)\nabla\log f(X_t)dt+(1/2) \Gamma(X_t) dt + \sqrt{G(X_t)} dW_t,  
\end{equation}
where
$\Gamma_i(X_t) = \sum_j \partial G_{ij} (X_t)/\partial
X_{t,j}$. Straightforward calculations show that \eqref{eq:plang}
satisfies \eqref{eq:fokk}. In practice, we often use
$G(X_t) = \mathbb{I}^{-1}(X_t)$ for some appropriate choice of $\mathbb{I}$. In that
case,
$\Gamma_i(X_t) = \sum_j \partial \mathbb{I}^{-1}_{ij} (X_t)/\partial
X_{t,j}$.  The proposal transition of \pcite{giro:cald:2011} manifold
MALA (MMALA) is driven by a  diffusion on a Riemannian manifold given by
\begin{equation}
  \label{eq:mlang}
    dX_t=(1/2)\mathbb{I}^{-1}(X_t)\nabla\log f^*(X_t)dt+(1/2) \Omega(X_t) dt + \sqrt{\mathbb{I}^{-1}(X_t)} dW_t,  
\end{equation}
with $f(x) = f^*(x) |\mathbb{I}(x)|^{1/2}$ and
$\Omega_i = \sum_j \partial \mathbb{I}^{-1}_{ij}/\partial X_{t,j} + 0.5 \sum_j
\mathbb{I}^{-1}_{ij} \partial \log |\mathbb{I}|/\partial X_{t,j}$. Here, we have
accounted for a transcription error of \cite{giro:cald:2011} as
mentioned in \cite{xifa:sher:2014}. From
\eqref{eq:mlang}, it follows that the proposal density of the MMALA
chain is
$N(x+(h/2)\mathbb{I}^{-1}(x)\nabla\log f^*(x)+(h/2) \Omega(x),
h\mathbb{I}^{-1}(x))$. In this article, we study
convergence properties of MH algorithms with the candidate distribution
$N(c(x), h G(x))$ for some general mean vector $c(x)$ and the covariance
matrix $hG(x)$. This will cover as special cases different variants of
the MALA discussed before.

\section{Markov chain background}
\label{sec:mcbg}
Let $(\sX, \mathcal{B})$ denote a Borel space. Here, we consider
$\sX = \mathbb{R}^d$ and let $\|\cdot \|$ denote the Euclidean
norm. Let $F$ denote the target probability measure and
$P(x, dy): \sX \times \mathcal{B} \rightarrow [0,1]$ be a Markov
transition function (Mtf). 
We will use $f(x)$ to denote the pdf of $F$ with respect to the
Lebesgue measure. Let $\{X_n\}_{n=0}^{\infty}$ be a Markov chain
driven by $P$. Let $P^n(\cdot,\cdot)$ denotes the $n-$step Mtf. Now,
$P$ is $\phi$-irreducible if there exists a non-zero $\sigma$-finite measure
$\phi$ on $\sX$ such that for all $A \in \mathcal{B}$ with $\phi(A) > 0$, and for all $x \in \sX$, there
exists a positive integer $n = n(x, A)$ such that $P^n(x, A) > 0$.
If $P$ is $\phi$-irreducible and $F$ is invariant with
respect to $P$, then $\{X_n\}_{n=0}^{\infty}$ can be used to
consistently estimate means with respect to $f$ \citep[][Chap
10]{meyn:twee:1993}. Indeed, under these conditions, if $g:\sX \rightarrow \mathbb{R}$ is
integrable with respect to $F$, that is, if
$\E_{f} [|g(x)|] := \int_{\sX} |g(x)| f(x)dx < \infty$, then
$\overline{g}_n := \sum_{i=0}^{n-1} g(X_i)/n \rightarrow \E_{f} [g]$
almost surely, as $n \rightarrow \infty$. On the other hand, Harris ergodicity of $P$ does not guarantee
a CLT for $\overline{g}_n$. We say a CLT for $\overline{g}_n$ exists
if
$\sqrt{n}( \overline{g}_n - \E_{f}[g]) \stackrel{d}{\rightarrow}
N(0,\sigma_{g}^2)$ as $n \rightarrow \infty$ for some
$\sigma^2_{g} \in (0,\infty)$. The most common method for ensuring
a Markov chain CLT is to establish that $\{X_n\}_{n=0}^{\infty}$ ($P$) is
geometrically ergodic (GE), that is, to demonstrate the existence of a function
$L:\sX \rightarrow [0,\infty)$ and a constant $\rho \in (0,1)$, such that
for all $n=0,1,2,\dots$,
\begin{align}
\label{eq:ge}
 \| P^n(x,\cdot)- F(\cdot)\|_{\TV} \le L(x)\rho^n,  \;\; x \in \sX,
\end{align}
where
$\| \cdot\|_{\TV}$ denotes the total variation norm. \eqref{eq:ge}
guarantees a CLT for $\overline{g}_n $ if
$\E_{f}[ g^{2+\delta}] < \infty$ for some $\delta >0$. \eqref{eq:ge}
also implies that a valid standard error $\hat{\sigma}_{g}/\sqrt{n}$
for $\overline{g}_n$ can be calculated by the batch means or the
spectral variance methods, which, in turn, can be used to decide `when
to stop' running the Markov chain \citep{vats:fleg:jone:2019,
  roy:2020}. Furthermore, as mentioned in \cite{roy:2020}, most of
the MCMC convergence diagnostics used in practice, for example, the
effective sample size and the potential scale reduction factor used
later in this paper, assume the existence of a Markov chain CLT,
emphasizing the importance of establishing \eqref{eq:ge}.

If $P$ is $\phi-$irreducible and aperiodic, then from
\pcite{meyn:twee:1993} chap 15, we know that \eqref{eq:ge} is equivalent to the
existence of a Lyapunov function $V: \sX \rightarrow [1, \infty]$ and constants
$\lambda < 1, b < \infty$ with
\begin{equation}
  \label{eq:drif}
  PV(x) \le \lambda V(x) + b I_C(x), \;\; x \in \sX,
\end{equation}
where $PV(x) = \int_{\sX} V(y) P(x, dy) = \E[V(X_1)|X_0=x]$ and
$C \subset \sX$ is small, meaning that $\exists \; \varepsilon>0$, integer
$k$, and a probability measure $\nu$ such that
$P^k(x, A) \ge \varepsilon \nu(A)$ $\forall x \in C$, and
$A \in \mathcal{B}$.

In the presence of some topological properties, we can use the
following result to establish geometric ergodicity of a Markov chain. The
function $V: \sX \rightarrow [0, \infty]$ is said to be unbounded off
compact sets if for any $a>0$, the level set
$\{x \in \sX: V(x) \le a\}$ is compact. The next proposition, which directly
follows from several results in \citet{meyn:twee:1993}, has been used for establishing
geometric ergodicity of different MCMC algorithms \cite[see e.g.][]{roy:hobe:2007,wang:roy:2018b}.
\begin{proposition}[Meyn and Tweedie]
  \label{prop:gedrif}
  Let $P$ be $\phi$-irreducible, aperiodic and Feller, where $\phi$ has nonempty interior. Suppose $V: \sX \rightarrow [0, \infty]$ is unbounded off compact sets such that 
\begin{equation}
  \label{eq:drif2}
  PV(x) \le \lambda V(x) + b,
\end{equation}
for all $x$ and for some constants
$\lambda < 1, b < \infty$, then $\{X_n\}_{n=0}^{\infty}$ is
GE.
\end{proposition}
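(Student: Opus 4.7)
The plan is to reduce the hypotheses to the standard Meyn--Tweedie characterization of geometric ergodicity, namely the existence of a Lyapunov function satisfying a drift condition toward a small set (equation \eqref{eq:drif}). Two things need to be produced: (i) a drift inequality of the form $PV \le \lambda' V + b' I_C$ for some compact $C$, and (ii) a verification that $C$ is small. Once both are in hand, invoking the equivalence in \citet[Chap.~15]{meyn:twee:1993} completes the argument.

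For step (i), the idea is to absorb the constant $b$ in \eqref{eq:drif2} into the Lyapunov term outside a large compact sublevel set of $V$. Concretely, I would pick any $\lambda' \in (\lambda, 1)$ and set $M := b/(\lambda' - \lambda)$, so that on $\{V > M\}$ one has $b \le (\lambda' - \lambda) V(x)$ and hence $PV(x) \le \lambda' V(x)$. Define $C := \{x \in \sX : V(x) \le M\}$, which is compact because $V$ is unbounded off compact sets. On $C$ the bound $PV(x) \le \lambda' V(x) + b$ trivially holds, so altogether
\begin{equation*}
PV(x) \le \lambda' V(x) + b\, I_C(x), \qquad x \in \sX.
\end{equation*}
This is the geometric drift condition \eqref{eq:drif} with $\lambda'<1$.

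For step (ii), the role of the Feller property and $\phi$-irreducibility with $\phi$ having nonempty interior is to upgrade $P$ to a (weak) T-chain in the sense of \citet{meyn:twee:1993}; their Chapter~6 then guarantees that every compact set is petite. Aperiodicity, combined with $\phi$-irreducibility (and the resulting existence of a maximal irreducibility measure $\psi$), then promotes petite sets to small sets via the standard cycle argument in \citet[Chap.~5]{meyn:twee:1993}. Thus the compact set $C$ constructed above is small.

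With the geometric drift toward the small set $C$ in place, and with $V \ge 1$ (or, if $V$ is only $[0,\infty]$-valued, after replacing $V$ by $V+1$, which is still unbounded off compact sets and still satisfies a geometric drift with adjusted constants), Theorem~15.0.1 of \citet{meyn:twee:1993} yields \eqref{eq:ge} and hence geometric ergodicity. The main obstacle I anticipate is the second step: carefully invoking the T-chain/petite-set machinery to justify that \emph{compact} sets are small under just Feller plus $\phi$-irreducibility with $\phi$ of nonempty interior, rather than the stronger strong-Feller hypothesis; the other steps are essentially algebraic manipulations of the drift inequality.
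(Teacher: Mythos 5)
Your proposal is correct and follows essentially the same route as the paper: the paper replaces $V$ by $V+1$, cites \citet[Theorem~6.0.1]{meyn:twee:1993} for the fact that compact sets are small (the Feller/T-chain argument you sketch), and cites \citet[Lemma~15.2.8]{meyn:twee:1993} for precisely the conversion of $PV\le\lambda V+b$ into a geometric drift toward a compact sublevel set that you carry out by hand. Your explicit construction of $C=\{V\le b/(\lambda'-\lambda)\}$ and the petite-to-small upgrade under aperiodicity are exactly the content of those cited results, so there is no gap.
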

\begin{proof}[Proof of Proposition~\ref{prop:gedrif}]
Let $V'(x) = V(x)+ 1$. Then $V': \sX \rightarrow [1, \infty]$ is also unbounded off compact sets and \eqref{eq:drif2} holds for $V'$ with $b$ replaced by $b+1-\lambda$. By \citet[][Theorem 6.0.1]{meyn:twee:1993} all compact sets of $\sX$
  are small. Then geometric ergodicity of $P$ follows from \citet[][Lemma 15.2.8]{meyn:twee:1993}.
\end{proof}
Next, we consider MH Markov chains. The Dirac point mass at $x$ is
denoted by $\delta_x(\cdot)$. An Mtf $P$ is said to be MH type if
\begin{equation}
  \label{eq:mhtr}
  P(x, dy) = \alpha(x, y) Q(x, dy) + r(x) \delta_x(dy),
\end{equation}
where $Q$ is an Mtf with density $q(x, y)$, $\alpha$ is as given in \eqref{eq:accpro} and
\begin{equation}
  \label{eq:rx}
  r(x) = 1 - \int_{\sX} \alpha(x, y) Q(x, dy).
\end{equation}
Since $P$ in \eqref{eq:mhtr} is reversible with respect to $f$,
$f$ is its stationary distribution.  If $f(x)$ and $q(x, y)$ are
positive and continuous for all $x, y$, then from \citet[][Lemma
1.2]{meng:twee:1996} we know that the MH type Mtf \eqref{eq:mhtr}
is aperiodic, and every nonempty compact set is small. A weaker
condition is given in \cite{robe:twee:1996} that assumes $q$ is
bounded away from zero in some region around the origin. In
particular, if $f(x)$ is bounded away from $0$ and $\infty$ on
compact sets, and $\exists \; \delta_q >0, \varepsilon_q >0$ such that for
all $x, \|x-y\| \le \delta_q \Rightarrow q(x, y) \ge \varepsilon_q,$ then
$P$ given in \eqref{eq:mhtr} is $\phi$-irreducible, aperiodic and
every nonempty compact set is small.

Let $B_k(x) = \{y: \|y-x\| < k\}$ denote the open ball with center $x$
and radius $k$. Following \cite{jarn:twee:2003} an Mtf $P$ is called
random-walk-type if for any $\varepsilon >0$, $\exists \; k >0$ such that
$P(x, B_k(x)) > 1 - \varepsilon$. If $P$ is of the form \eqref{eq:mhtr}
\[
P(x, B_k(x)) = \int_{ B_k(x)} \alpha(x, y) Q(x, dy) + \int_{\sX} (1 - \alpha(x, y)) Q(x, dy) \ge \int_{ B_k(x)} Q(x, dy),
\]
then it is enough to verify $Q(x, B_k(x)) > 1 - \varepsilon$ for $P$ to be random-walk-type. We now provide some conditions for $P$ in \eqref{eq:mhtr} to be GE.
\begin{proposition}
  \label{prop:mhge}
  Suppose $P$ is of the form \eqref{eq:mhtr} and it is
  $\phi$-irreducible, aperiodic, and every nonempty compact set is
  small. If there exists a function $V: \sX \rightarrow [1, \infty]$,
  which is bounded on compact sets with
  \begin{equation}
    \label{eq:limsupp}
\limsup_{\|x\|\rightarrow\infty}\frac{PV(x)}{V(x)} < 1    
\end{equation}
and
  \begin{equation}
    \label{eq:bddrat}
\frac{PV(x)}{V(x)} \;\text{is bounded on compact sets},   
\end{equation}
then \eqref{eq:drif} holds for a small set $C$. Conversely, if $P$ is
random-walk-type and \eqref{eq:drif} holds, then $V$ satisfies
\eqref{eq:limsupp} and \eqref{eq:bddrat}.
\end{proposition}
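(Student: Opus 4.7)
The plan for the forward direction is to use the limsup condition \eqref{eq:limsupp} to locate a nonempty compact (and hence small) set outside of which a clean geometric drift holds. First I would pick any $\lambda$ strictly between $\limsup_{\|x\|\to\infty} PV(x)/V(x)$ and $1$. By \eqref{eq:limsupp} there exists $R>0$ with $PV(x)\le \lambda V(x)$ whenever $\|x\|>R$. Set $C=\{x:\|x\|\le R\}$; this $C$ is compact and nonempty, hence small by hypothesis. Since $V$ is bounded on compact sets and $PV/V$ is bounded on $C$ by \eqref{eq:bddrat}, their product $PV$ is bounded on $C$, say $PV(x)\le b$. Combining $PV(x)\le \lambda V(x)$ on $C^c$ with $PV(x)\le b\le \lambda V(x)+b\,I_C(x)$ on $C$ (using $V\ge 1>0$) then yields \eqref{eq:drif}.

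The converse splits into two parts. Condition \eqref{eq:bddrat} follows almost immediately from \eqref{eq:drif}: since $V(x)\ge 1$ for every $x$,
\[
\frac{PV(x)}{V(x)} \;\le\; \lambda + \frac{b\,I_C(x)}{V(x)} \;\le\; \lambda + b,
\]
which is a uniform bound and in particular bounded on every compact set. For \eqref{eq:limsupp}, the strategy is to reduce to showing that the small set $C$ appearing in \eqref{eq:drif} must be bounded under the random-walk-type hypothesis; once this is established, $x\notin C$ for all $\|x\|$ large enough, whence $PV(x)/V(x)\le \lambda<1$ and \eqref{eq:limsupp} follows.

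The main obstacle is therefore the boundedness of the small set $C$ for a random-walk-type chain, which I would argue as follows. Because $C$ is small, there exist an integer $k_0$, a constant $\varepsilon_0>0$, and a probability measure $\nu$ with $P^{k_0}(x,A)\ge \varepsilon_0\nu(A)$ for all $x\in C$ and $A\in \mathcal{B}$. Choose $R_0$ with $\nu(B_{R_0}(0))>1/2$, so that $P^{k_0}(x,B_{R_0}(0))\ge \varepsilon_0/2$ for every $x\in C$. The random-walk-type property, iterated $k_0$ times via a union bound over the one-step excursion events, produces $K>0$ such that $P^{k_0}(x,B_K(x))\ge 1-\varepsilon_0/4$ uniformly in $x$. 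If some $x\in C$ had $\|x\|\ge R_0+K$, the triangle inequality would force $B_K(x)\cap B_{R_0}(0)=\emptyset$, hence $P^{k_0}(x,B_{R_0}(0))\le \varepsilon_0/4$, contradicting the lower bound $\varepsilon_0/2$. Therefore $C\subseteq B_{R_0+K}(0)$, completing the converse.
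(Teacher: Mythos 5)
Your proof is correct, and both directions follow the same structure as the paper's: the forward direction chooses $\lambda$ strictly between the limsup and $1$, takes $C$ to be a large closed ball (small by hypothesis), and bounds $PV$ on $C$ by the product of the two boundedness assumptions; the converse reads \eqref{eq:bddrat} and \eqref{eq:limsupp} off the inequality $PV(x)/V(x)\le \lambda + b\,I_C(x)/V(x)$ once the small set $C$ is known to be bounded. The one genuine difference is how that last fact is obtained: the paper simply cites Lemma 2.2 of Jarner and Hansen (2000), which states that for a random-walk-type chain every small set is bounded, whereas you prove it from scratch via the minorization $P^{k_0}(x,\cdot)\ge \varepsilon_0\nu(\cdot)$ on $C$, the choice of $R_0$ with $\nu(B_{R_0}(0))>1/2$, and the union-bound iteration of the one-step random-walk-type property to get $P^{k_0}(x,B_K(x))\ge 1-\varepsilon_0/4$ uniformly in $x$, which forces $C\subseteq B_{R_0+K}(0)$. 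That argument is sound (for $\|x\|\ge R_0+K$ the triangle inequality does give $B_K(x)\cap B_{R_0}(0)=\emptyset$, producing the contradiction $\varepsilon_0/2\le\varepsilon_0/4$), and it makes the proof self-contained at the cost of a few extra lines; the paper's citation buys brevity but leaves the reader to look up the lemma.
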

The proof of this result can be gleaned from
\cite{jarn:hans:2000}. However, we provide a proof here for
completeness. Among other conditions, \pcite{jarn:hans:2000} Lemma 3.5 assumes
that $PV(x)/V(x)$ is bounded which is often violated as in the examples considered here.
\begin{proof}[Proof of Proposition~\ref{prop:mhge}]
  Note that, under \eqref{eq:limsupp}, \eqref{eq:drif} holds for all
  $x$ outside $C= \{x: \|x\| \le k\}$ for $k$ sufficiently
  large. Since \eqref{eq:bddrat} holds, and $V$ is bounded on compact sets, we have
  \[
    \sup_{\|x\| \le k} PV(x) \le \sup_{\|x\| \le k} \frac{PV(x)}{V(x)} \sup_{\|x\| \le k} V(x) < \infty .
  \]
  From the conditions, we know that $C$ is small. Thus,
  \eqref{eq:drif} holds. For the converse, by Lemma 2.2 of
  \cite{jarn:hans:2000} we know that every small set is bounded. Since \eqref{eq:drif} holds
  \[
    \frac{PV(x)}{V(x)} \le \lambda + \frac{b I_C(x)}{V(x)},
  \]
  implying \eqref{eq:limsupp} as $C$ is bounded and \eqref{eq:bddrat} as $b I_C(x)/V(x) \le b$.
\end{proof}
Note that
$PV(x)/V(x) = \int_{\sX} [V(y)/V(x)]\alpha(x, y) Q(x, dy) + r(x)$. As
shown in \cite{robe:twee:1996} if ess sup $r(x) =1$, then $P$ is not
GE. Necessary conditions for geometric ergodicity can also be
established by the following result of \cite{jarn:twee:2003}.
\begin{proposition}[Jarner and Tweedie]
  \label{prop:mhnec}
  If $P$ is random-walk-type with stationary density $f$, and if it is GE, then $\exists \; s>0$ such that $\E_f(\exp[s \|X\|]) < \infty$.
\end{proposition}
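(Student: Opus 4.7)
The plan is to convert geometric ergodicity into a drift inequality, argue from the random-walk-type property that any such drift function must grow at least exponentially in $\|x\|$, and then translate this into an exponential moment bound for $f$ via invariance.

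First, since $P$ is GE it is necessarily $\phi$-irreducible and aperiodic, so by the Meyn--Tweedie equivalence recalled just before \eqref{eq:drif} there exist $V:\sX\to[1,\infty]$, constants $\lambda<1$, $b<\infty$, and a small set $C$ with $PV\le\lambda V+b\,\mathbf{1}_C$. Small sets are bounded by Lemma 2.2 of \cite{jarn:hans:2000}, so $C\subset B_R(0)$ for some $R>0$. Standard invariance arguments (see, e.g., Theorem 14.3.7 of \citet{meyn:twee:1993}) then yield $\E_f[V]\le b F(C)/(1-\lambda)<\infty$.

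The heart of the proof is an exponential lower bound $V(x)\ge c\,e^{s\|x\|}$ for $\|x\|$ large. Let $\tau_C:=\inf\{n\ge 0:X_n\in C\}$. For $x\notin C$, the drift inequality reduces to $PV\le\lambda V$ on $C^c$, and a direct computation shows that $M_n:=\lambda^{-n}V(X_n)\mathbf{1}(\tau_C>n)$ is a non-negative $P_x$-supermartingale. Hence, using $V\ge 1$,
\[
V(x)=\E_x[M_0]\ge \E_x[M_n]\ge \lambda^{-n}P_x(\tau_C>n).
\]
To lower-bound $P_x(\tau_C>n)$, fix $\varepsilon\in(0,1-\lambda)$ and use the random-walk-type hypothesis to pick $k=k(\varepsilon)$ with $P(y,B_k(y))>1-\varepsilon$ for every $y$. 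Conditioning on $\mathcal{F}_{n-1}$ and iterating the Markov property gives $P_x(A_n)\ge(1-\varepsilon)^n$, where $A_n$ is the event that $\|X_j-X_{j-1}\|\le k$ for all $1\le j\le n$; on $A_n$ one has $\|X_j\|\ge\|x\|-jk$, so $X_j\notin B_R(0)\supseteq C$ for $0\le j\le n$ whenever $\|x\|>nk+R$. Taking $n=\lfloor(\|x\|-R)/k\rfloor$ yields $V(x)\ge c\,e^{s\|x\|}$ with $s=k^{-1}\log\bigl((1-\varepsilon)/\lambda\bigr)>0$.

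Combining the two pieces, $\E_f\bigl[e^{s\|X\|}\bigr]\le c^{-1}\E_f[V]<\infty$, proving the claim. The principal obstacle is the exponential lower bound on $V$: one must dovetail the supermartingale inequality with the iterated random-walk bound and choose $n$ to be linear in $\|x\|$. The constraint $\varepsilon<1-\lambda$ is precisely what forces the base $(1-\varepsilon)/\lambda$ to exceed one, ensuring genuine exponential growth; a smaller choice of $\varepsilon$ sharpens $s$ at the cost of a larger $k(\varepsilon)$. A secondary technical point is establishing $\E_f[V]<\infty$ directly from \eqref{eq:drif} without presupposing integrability, which is handled by the standard Meyn--Tweedie invariance argument cited above.
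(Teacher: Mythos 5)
The paper does not prove this proposition at all: it is imported verbatim from \cite{jarn:twee:2003}, so there is no in-paper argument to compare yours against. Your proof is correct and is essentially a self-contained reconstruction of the Jarner--Tweedie argument. The three ingredients are all sound: (i) GE together with $\phi$-irreducibility and aperiodicity yields the drift $PV\le\lambda V+b\,\mathbf{1}_C$ with $V\ge 1$ and $C$ small, and for random-walk-type chains small sets are bounded (the same Lemma 2.2 of \cite{jarn:hans:2000} that the paper invokes in the converse half of Proposition~\ref{prop:mhge}); (ii) $\E_f[V]<\infty$ follows from integrating the drift against the invariant measure, with the truncation argument of \citet{meyn:twee:1993} covering the a priori integrability issue you flag; (iii) the supermartingale $M_n=\lambda^{-n}V(X_n)\mathbf{1}(\tau_C>n)$ gives $V(x)\ge\lambda^{-n}P_x(\tau_C>n)$, and the random-walk-type property forces $P_x(\tau_C>n)\ge(1-\varepsilon)^n$ for $n$ of order $\|x\|/k$, which with $\varepsilon<1-\lambda$ produces genuine exponential growth of $V$. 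Two points you leave implicit but which are harmless: the lower bound $V(x)\ge ce^{s\|x\|}$ only holds for $\|x\|$ large, but $e^{s\|x\|}$ is bounded on the complementary set so the moment bound is unaffected; and since $B_k(y)$ is the open ball, the increments satisfy $\|X_j-X_{j-1}\|<k$ strictly, which if anything makes the floor-function choice of $n$ easier to justify.
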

\section{Geometric ergodicity of the general MALA}
\label{sec:gmalage}
In this section, we study geometric convergence rates for the MH
algorithms with candidate distribution $N(c(x), h G(x))$. Thus, the proposal density is given by
\begin{equation}
  \label{eq:propgmala}
  q(x, y) = \frac{1}{(2h\pi)^{d/2} |G(x)|^{1/2}} \exp\{-(y-c(x))^\top G(x)^{-1}(y-c(x))/2h\}.  
\end{equation}
As explained in Section~\ref{sec:gmala}, distinct forms of the mean
function $c(x)$ and the covariance matrix $hG(x)$ result in the MALA and its
different variants. Let $A(x)$ denote the acceptance region, where the
proposed positions are always accepted, that is,
$A(x)=\{y:f(x)q(x,y)\leq f(y)q(y,x)\}$. If $y \in A(x)$, then
$\alpha(x, y)$ defined in \eqref{eq:accpro} is always one. Let
$R(x)=A(x)^c$ be the potential rejection region. We now define the following conditions.
\begin{itemize}
\item[A1] There exist positive definite matrices $G_1$ and $G_2$ such that $G_1 \le G(x) \le G_2 \;\forall x$.
\item[A2] The mean function $c(x)$ is bounded on bounded sets.
\item[A3] $C_1 :=\limsup_{\|x\|\rightarrow\infty}\int_{R(x)}q(x,y)(1-\alpha(x,y))dy<1$.
  \item[A4] There exists $s>0$ such that
    \begin{equation}
      \label{eq:eta}
      \eta:=\liminf_{\|x\|\rightarrow\infty}\Big(\|G_2^{-1/2}x\|-\|G_2^{-1/2}c(x)\|\Big)>\frac{\log C_2(s)-\log(1-C_1)}{s},      
    \end{equation}
where
\begin{equation}
  \label{eq:c2s}
      C_2(s)=h^{-d/2} (\pi/2)^{(d-2)/2} (|G_2|/|G_1|)^{1/2}\exp\{hs^2/2\} \int_{0}^{\infty}\exp\{-(r-hs)^2/(2h)\}r^{d-1}dr.
\end{equation}
\end{itemize}
Here, for two square matrices $G_1$ and $G_2$ having the same
dimensions, $G_1 \le G_2$ means that $G_2 - G_1$ is a positive
semi-definite matrix. That is, $G_1 \le G_2$ is the usual Loewner
order on matrices. Let $\zeta_{i+}$ and $\zeta_{i}^{+}$ be the
smallest and the largest eigenvalue of $G_i$, respectively for
$i=1,2$.
\begin{remark}
 Since $\|x\|/\sqrt{\zeta_{2}^+} \le \|G_2^{-1/2}x\| \le \|x\|/\sqrt{\zeta_{2+}}$, a sufficient condition for A4 that may be easier to check is
$ \liminf_{\|x\|\rightarrow\infty}(\|x\|/\sqrt{\zeta_{2}^+}-\|c(x)\|/\sqrt{\zeta_{2+}})>[\log C_2(s)-\log(1-C_1)]/s.$  
\end{remark}
We now state sufficient conditions for
geometric ergodicity of the MH chains with a position-dependent covariance
matrix.
\begin{theorem} 
\label{thm:gmalage}
Suppose the conditions A1--A4 hold. If $f(x)$ is bounded away from
$0$ and $\infty$ on compact sets, the MH chain with proposal density
\eqref{eq:propgmala} is GE.
\end{theorem}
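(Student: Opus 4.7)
The plan is to verify the hypotheses of Proposition~\ref{prop:mhge} with the Lyapunov function $V(x)=\exp(s\|G_2^{-1/2}x\|)$, where $s$ is the constant supplied by A4. First I would dispose of the topological preliminaries: the Gaussian density $q(x,y)$ in \eqref{eq:propgmala} is strictly positive and continuous on $\mathbb{R}^d\times\mathbb{R}^d$, and with $f$ bounded away from $0$ and $\infty$ on compact sets, Lemma~1.2 of \citet{meng:twee:1996} (cited in Section~\ref{sec:mcbg}) gives that $P$ is aperiodic and every nonempty compact set is small, while $\phi$-irreducibility with respect to Lebesgue measure is immediate from the positivity of $q$.

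The analytic core is the estimate
\[
\int\frac{V(y)}{V(x)}\,q(x,y)\,dy\;\le\;C_2(s)\exp\{-sD(x)\},\qquad D(x):=\|G_2^{-1/2}x\|-\|G_2^{-1/2}c(x)\|,
\]
which I would establish in three steps. (i) From A1 together with the Loewner-order facts $G(x)^{-1}\ge G_2^{-1}$ and $|G(x)|\ge|G_1|$, I would pointwise dominate $q(x,y)\le(|G_2|/|G_1|)^{1/2}\tilde q(x,y)$, where $\tilde q(x,\cdot)$ is the $N(c(x),hG_2)$ density. (ii) Under $\tilde q$ the linear transformation $w=G_2^{-1/2}y$ sends the proposal to $N(G_2^{-1/2}c(x),hI_d)$, and the triangle inequality $\|w\|\le\|G_2^{-1/2}c(x)\|+\|w-G_2^{-1/2}c(x)\|$ factors $V(y)$ into $\exp(s\|G_2^{-1/2}c(x)\|)\exp(s\sqrt{h}\|Z\|)$ with $Z\sim N(0,I_d)$. (iii) Evaluating $\E[\exp(s\sqrt{h}\|Z\|)]$ in polar coordinates and completing the square in the radial exponent recovers the integral $\int_0^\infty\exp\{-(r-hs)^2/(2h)\}r^{d-1}\,dr$ appearing in \eqref{eq:c2s}, assembling the constants into $C_2(s)$.

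With this bound in hand, the drift analysis follows quickly. Since $\alpha(x,y)\le1$,
\[
\frac{PV(x)}{V(x)}=\int\frac{V(y)}{V(x)}\alpha(x,y)q(x,y)\,dy+r(x)\;\le\;C_2(s)\exp\{-sD(x)\}+r(x),
\]
while the identity $r(x)=\int_{R(x)}(1-\alpha(x,y))q(x,y)\,dy$ together with A3 yields $\limsup_{\|x\|\to\infty}r(x)\le C_1$. Taking $\limsup$ and invoking A4 through \eqref{eq:eta} gives $\limsup_{\|x\|\to\infty}PV(x)/V(x)\le C_2(s)e^{-s\eta}+C_1<1$, which is \eqref{eq:limsupp}. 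The same estimate, together with A2 (boundedness of $c$ on bounded sets) and A1 (so $|G(x)|$ and $G(x)^{-1}$ are controlled uniformly), shows $PV/V$ is bounded on compact sets, i.e.\ \eqref{eq:bddrat}; continuity of $V$ handles its own boundedness on compacts. Proposition~\ref{prop:mhge} then produces the drift~\eqref{eq:drif} on a small set, and geometric ergodicity follows.

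The trickiest step is (iii): one has to apply the triangle inequality in the direction that makes $\|G_2^{-1/2}c(x)\|$ emerge on the correct side of $D(x)$, and then carry the dimensional factor $(\pi/2)^{(d-2)/2}$ and the determinant ratio $(|G_2|/|G_1|)^{1/2}$ through the polar-coordinate and square-completion manipulations so that the final constant matches the closed form \eqref{eq:c2s} required by A4. The remainder is routine bookkeeping on the Metropolis--Hastings decomposition.
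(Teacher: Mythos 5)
Your proof follows essentially the same route as the paper's: the drift function $V_s(x)=\exp\{s\|G_2^{-1/2}x\|\}$, the domination of $q(x,\cdot)$ via A1 by a constant times the $N(c(x),hG_2)$ density, the triangle inequality after the change of variables $z=G_2^{-1/2}(y-c(x))$, polar coordinates and completion of the square yielding $C_2(s)$, and the split $PV_s/V_s\le\int q(x,y)V_s(y)/V_s(x)\,dy+r(x)$ with A3 controlling $\limsup r(x)$ and A4 closing the drift inequality before invoking Proposition~\ref{prop:mhge}. The one blemish is your appeal to continuity of $q$ (hence of $c(\cdot)$ and $G(\cdot)$) to get aperiodicity and smallness of compact sets from Lemma 1.2 of Mengersen and Tweedie; A1--A4 do not guarantee this continuity, so you should instead, as the paper does, minorize $P(x,\cdot)$ directly on a compact set $C$ via $\varepsilon=\inf_{x,y\in C}q(x,y)>0$ (which follows from A1, A2 and positive definiteness of $G_1,G_2$) together with the boundedness of $f$ on compact sets.
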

\begin{remark}
  \label{rem:gmalage}
  The proof of Theorem~\ref{thm:gmalage} given in~\ref{sec:approof} uses a Lyapunov drift function
  $V_s(x)=\exp\{s\|G_2^{-1/2}x\|\}$, with $s>0$. By considering a different drift
  function $V'_s(x)=\exp\{s\|x\|\}$, $s>0$, and following the steps
  in that proof and using the fact that $G_2 \le \zeta_{2}^+ I_d$,
  another alternative for A4 can be obtained. Indeed, the condition A4
  in Theorem~\ref{thm:gmalage} can be replaced by the existence of
  $s>0$ with
  $\liminf_{\|x\|\rightarrow\infty}\big(\|x\|-\|c(x)\|\big)>[\log
  C'_2(s)-\log(1-C_1)]/s,$ where
\[
      C'_2(s)=h^{-d/2} (\pi/2)^{(d-2)/2} (\exp\{h\zeta_{2}^+s^2\}/|G_1|)^{1/2} \int_{0}^{\infty}\exp\{-(r-hs\zeta_{2}^+)^2/(2h\zeta_{2}^+)\}r^{d-1}dr.
\]
\end{remark}
\begin{remark}
  \label{rem:convinq}
  As mentioned in the Introduction, \cite{robe:twee:1996} derived conditions under
  which the MALA chain is GE. One of their conditions is `$A(\cdot)$
  converges inwards in $q$' which means
  $\lim_{\|x\|\rightarrow\infty}\int_{A(x)\Delta {\text{In}(x)}}q(x,y) dy=0$,
  where $\text{In}(x)=\{y:\|y\|\leq\|x\|\}$ and
  $A(x)\Delta \text{In}(x)=(A(x) \setminus \text{In}(x))\cup(\text{In}(x) \setminus A(x))$. Recently,
  \cite{livi:beta:byrn:2019} assume a slightly weaker condition
  $\lim_{\|x\|\rightarrow\infty}\int_{R(x)\cap \text{In}(x)}q(x,y)dy=0$ for
  establishing geometric ergodicity of Hamiltonian Monte Carlo Markov
  chains. Below we show that if A1 holds and $\|c(x)\| < M$ for all
  $x$, then $\lim_{\|x\|\rightarrow\infty}\int_{R(x)\cap \text{In}(x)}q(x,y)dy=0$ implies that $C_1 =0$,
  that is, in that case, A3 automatically holds.

  \begin{proof}[Proof of Remark~\ref{rem:convinq}]
    Since $\|c(x)\| < M$, by Cauchy-Schwartz inequality,
    \[c(x)^\top G_2^{-1}y \le \sqrt{y^\top G_2^{-1}y}\sqrt{c(x)^\top
      G_2^{-1}c(x)} \le (M/\sqrt{\zeta_{2+}})\sqrt{y^\top G_2^{-1}y}.\] Thus,
    from \eqref{eq:propgmala}, we have
    $q(x, y) \le a \exp\{-(\|G_2^{-1/2}y\|-M/\sqrt{\zeta_{2+}})^2/2h\}$ for
    some constant $a>0$. Then $C_1=0$ follows since
    \[
      C_1 \le \limsup_{\|x\|\rightarrow\infty}\int_{R(x) \cap \text{In}(x)} q(x,y)dy + \limsup_{\|x\|\rightarrow\infty}\int_{R(x) \cap \text{In}(x)^c} q(x,y)dy,
      \]
      and by DCT, the second term of the right side is zero.
  \end{proof}
\end{remark}
\begin{remark}
  \label{rem:strcon}
  For analyzing HMC algorithms, \cite{mang:smit:2021} assume that
  there exist $0 < m_2, M_2 < \infty$ such that
  $ m_2 I_d \le -\nabla^2 \log f (x) \le M_2 I_d$ for all
  $x \in \mathbb{R}^d$. A smooth target density
  $f(x) \propto \exp (- U(x))$ satisfies this condition if and only
  if $U$ is {\it $m_2$ strongly convex} and has {\it $M_2$-Lipschitz
    gradient}. Strong convexity and the existence of a Lipschitz gradient of
  $U$ are also assumed for the analysis of Langevin algorithms
  in \cite{durm:moul:2019} \cite[see also][]{dwiv:chen:wain:yu:2019}. Thus, in
  the special case of $G (x)= (-\nabla^2 \log f (x))^{-1}$, which is
  often used in practice for implementing the MMALA
  \citep{giro:cald:2011}, A1 is same as the assumption of
  \cite{mang:smit:2021} mentioned above.
\end{remark}

\begin{remark}
  \label{rem:choiceG}
  As discussed in \cite{giro:cald:2011}, for implementing the MMALA
  and the PMALA in Section~\ref{sec:sglmm}, we use $G=\mathscr{I}^{-1}$,
  the expected Fisher information matrix plus the negative Hessian of
  the logarithm of the prior density. For such a $G$, we show that A1
  holds for the popular binomial-logit link GLMM, and
  Theorem~\ref{thm:gmalage} is used to establish a CLT for these Markov chains. On the other
  hand, for establishing consistency of $\bar{g}_n$ for the adaptive
  Metropolis algorithm, \cite{haar:saks:tamm:2001} assume that the
  proposal covariance matrix $G_n$ satisfies
  A1 even for the
  target density that is bounded from above and has bounded support.
\end{remark}

\begin{remark}
  If $c(x)$ is a continuous function of $x$, then A2 holds. For example,
  for the MALA or the PCMALA if $\nabla\log f(x)$ is continuous, then A2
  holds.
\end{remark}

\begin{remark}
  From \eqref{eq:c2s},
  $C_2(s) = h^{-d/2} (\pi/2)^{(d-2)/2} (|G_2|/|G_1|)^{1/2}
  \int_{0}^{\infty}\exp\{-r^2/(2h) + rs\}r^{d-1}dr$. Thus, $C_2(s)$ is
  increasing in $s$. \cite{robe:twee:1996} considered the MALA chain.
  When $d=1$, for the MALA chains, $G_1 = 1 = G_2$ and
  $C_2(s) = (h\pi/2)^{-1/2} \int_{0}^{\infty}\exp\{-r^2/(2h) +
  rs\}dr$. So $\lim_{s \rightarrow 0} C_2(s) =1$. From
  Remark~\ref{rem:convinq}, we know that under \pcite{robe:twee:1996}
  `$A(\cdot)$ converges inwards in $q$' condition, we have $C_1=0$, so the
  condition \eqref{eq:eta} is equivalent to $\exp(s \eta) >
  C_2(s)$. On the other hand, \pcite{robe:twee:1996}
  other condition for the MALA chain to be GE is $\eta > 0$.
\end{remark}

In the proof of Theorem~\ref{thm:gmalage} we have worked with the
drift function $V_s(x)=\exp\{s\|G_2^{-1/2} x\|\}$, with $s>0$. Using a
different drift function we establish the following theorem providing
a slightly different condition for geometric ergodicity. Let us define another condition:
\begin{itemize}
  \item[A5]
      $\limsup_{\|x\|\rightarrow\infty}(\|c(x)\|^2/\|x\|^2)<(1-C_1)(|G_1|/|G_2|)^{1/2}.$      
\end{itemize}

\begin{theorem}
  \label{thm:gmalage2}
  Suppose the conditions A1--A3 and A5 hold. If $f(x)$ is bounded away from
$0$ and $\infty$ on compact sets, the MH chain with proposal density
\eqref{eq:propgmala} is GE. 
\end{theorem}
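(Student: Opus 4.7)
The plan is to apply Proposition~\ref{prop:mhge} with the quadratic Lyapunov function $V(x) = 1 + \|x\|^2$, which takes values in $[1,\infty)$, is bounded on every compact set, and is unbounded off compact sets. First, the standing qualitative hypotheses on $P$ should be checked: under A1 and A2 the proposal density $q(x,y)$ in \eqref{eq:propgmala} is positive and continuous in $(x,y)$, so combined with $f$ being bounded away from $0$ and $\infty$ on compact sets, \citet[Lemma 1.2]{meng:twee:1996} yields that $P$ is $\phi$-irreducible, aperiodic, and every nonempty compact set is small---the same qualitative setup invoked for Theorem~\ref{thm:gmalage}.

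The crux of the argument is a drift estimate that surfaces the factor $(|G_1|/|G_2|)^{1/2}$ appearing in A5. The key tool is a uniform Gaussian majorization: writing $\tilde{q}(\cdot)$ for the $N(0,hG_2)$ density, A1 gives
\begin{equation*}
q(x,y)\;\le\;\Bigl(\tfrac{|G_2|}{|G_1|}\Bigr)^{1/2}\,\tilde{q}(y-c(x))\qquad\text{for all }x,y,
\end{equation*}
because $G(x)^{-1}\ge G_2^{-1}$ sharpens the Gaussian exponent in \eqref{eq:propgmala} to its $G_2$ version, while $|G(x)|^{-1/2}\le|G_1|^{-1/2}$ inflates the normalizing constant. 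Writing the MH kernel as in \eqref{eq:mhtr} and using $\alpha\le 1$, one has
\begin{equation*}
PV(x)\;\le\;\int V(y)\,q(x,y)\,dy\;+\;r(x)\,V(x),\qquad r(x)=\int_{R(x)}(1-\alpha(x,y))q(x,y)\,dy.
\end{equation*}
Integrating the majorant against $V(y)=1+\|y\|^2$ and using the second-moment identity for $N(c(x),hG_2)$ gives $\int V(y)q(x,y)\,dy \le (|G_2|/|G_1|)^{1/2}\bigl(\|c(x)\|^2+h\,\tr(G_2)+1\bigr)$. By A3, $\limsup_{\|x\|\to\infty} r(x)\le C_1$, hence
\begin{equation*}
\limsup_{\|x\|\to\infty}\frac{PV(x)}{V(x)}\;\le\;\Bigl(\tfrac{|G_2|}{|G_1|}\Bigr)^{1/2}\limsup_{\|x\|\to\infty}\frac{\|c(x)\|^2}{\|x\|^2}\;+\;C_1,
\end{equation*}
which is strictly less than $1$ by A5, establishing \eqref{eq:limsupp}. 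The companion bound \eqref{eq:bddrat} is immediate: by A2 the vector $c(x)$ is bounded on bounded sets, by A1 one has $\tr(G(x))\le\tr(G_2)$, and $r(x)\le 1$, so $PV(x)/V(x)$ is bounded on compact sets.

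With both hypotheses of Proposition~\ref{prop:mhge} verified, the drift inequality \eqref{eq:drif} holds on a small compact set, and standard Meyn--Tweedie theory then delivers geometric ergodicity. The main obstacle, to my mind, is the quantitative passage from the matrix sandwich $G_1\le G(x)\le G_2$ in A1 to the clean Gaussian majorization above: transferring the position-dependent proposal to a fixed Gaussian is what makes $\limsup QV(x)/V(x)$ tractable, and the price paid in the normalizing constant (the smallest allowed $|G_1|$) together with the sharpening of the quadratic form (the largest allowed $G_2$) is precisely the source of the determinant ratio $(|G_1|/|G_2|)^{1/2}$ in A5. The remaining ingredients---the MH decomposition, the second-moment calculation for a multivariate normal, and using A3 to control the rejection mass---are routine.
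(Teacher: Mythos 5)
Your proof is correct and follows essentially the same route as the paper's: the same MH decomposition bounding $PV(x)/V(x)$ by $\int V(y)q(x,y)\,dy/V(x)$ plus the rejection mass, the same Gaussian majorization of $q(x,y)$ by the $N(c(x),hG_2)$ density inflated by $(|G_2|/|G_1|)^{1/2}$, the same second-moment computation yielding $\|c(x)\|^2 + h\,\tr(G_2)$, and the same appeal to Proposition~\ref{prop:mhge} (the paper works with $V(x)=x^\top x$ and adds $1$ at the end, which is only a cosmetic difference from your $V(x)=1+\|x\|^2$).
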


\begin{remark}
  If the growth rate of $\|c(x)\|$ is smaller than that of $\|x\|$, 
  then A4 and A5 hold (see e.g. the binomial SGLMM example in
  Section~\ref{sec:sglmm}). In this case, $C_1$ does not need to be
  explicitly found to be used within A4 or A5. On the other hand, if $\eta$
  can be derived, then a grid search for $s$ can be done to verify A4.
\end{remark}

\begin{remark}
  \label{rem:choiceprop}
  Although a Gaussian proposal density \eqref{eq:propgmala} is assumed
  in Theorems~\ref{thm:gmalage} and \ref{thm:gmalage2}, following the
  proofs of these results, one may try to establish conditions for
  geometric ergodicity for other proposal densities as long as upper
  bounds to the means of the drift functions with respect to these
  densities can be
  derived.
\end{remark}
We now provide some general conditions under which an MH algorithm
with proposal density \eqref{eq:propgmala} does not produce a GE
Markov chain. Recall that for the MALA chain and its variants, the
mean function $c(x)$ is of the form $x + h e(x)$ for some function
$e(x)$ and step-size $h$. For the rest of this section, we assume
$c(x) = x + h e(x)$.

\begin{theorem}
  \label{thm:necmgf}
  If A1 holds and $\|e(x)\| < M$ for all
  $x$ and for some $M>0$, then a necessary condition for geometric ergodicity of the MH chain
with proposal density \eqref{eq:propgmala} is $\E_f(\exp[s \|X\|]) < \infty$ for some $s >0$.  
\end{theorem}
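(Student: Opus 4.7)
The plan is to invoke Proposition~\ref{prop:mhnec} of Jarner and Tweedie. That result delivers the claimed conclusion immediately once we know the MH chain is random-walk-type, so the entire task reduces to verifying the random-walk-type condition under A1 together with the uniform bound $\|e(x)\| \le M$.

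Following the reduction noted after Proposition~\ref{prop:gedrif}, it suffices to show that for every $\varepsilon>0$ there exists $k>0$ such that $Q(x, B_k(x)) > 1-\varepsilon$ for all $x$, where $Q$ is the Gaussian proposal \eqref{eq:propgmala}. Writing a proposal as $Y = c(x) + \sqrt{hG(x)}\,Z$ with $Z\sim N(0,I_d)$ and using $c(x)=x+he(x)$, we have
\[
\|Y-x\| \;\le\; h\|e(x)\| + \sqrt{h}\,\|\sqrt{G(x)}\,Z\| \;\le\; hM + \sqrt{h}\,\|\sqrt{G(x)}\,Z\|.
\]
By A1, $\|\sqrt{G(x)}\,Z\|^2 = Z^\top G(x)Z \le Z^\top G_2 Z \le \zeta_{2}^{+}\|Z\|^2$, so $\|Y-x\| \le hM + \sqrt{h\zeta_{2}^{+}}\,\|Z\|$. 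Since the distribution of $\|Z\|$ does not depend on $x$, given any $\varepsilon>0$ we can pick $k_0$ with $\Pr(\|Z\| > k_0) < \varepsilon$, and then set $k := hM + \sqrt{h\zeta_{2}^{+}}\,k_0$. This yields $Q(x, B_k(x)) \ge \Pr(\|Z\|\le k_0) > 1-\varepsilon$ uniformly in $x$. Consequently $P(x, B_k(x)) \ge Q(x, B_k(x)) > 1 - \varepsilon$, so $P$ is random-walk-type.

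With the random-walk-type property in hand, Proposition~\ref{prop:mhnec} applied to the geometrically ergodic $P$ yields the existence of $s>0$ with $\E_f(\exp[s\|X\|])<\infty$, completing the proof.

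The only nontrivial ingredient is the random-walk-type verification, and the main obstacle there is exactly what A1 and $\|e(x)\|\le M$ are designed to control: without the uniform upper bound $G(x)\le G_2$, the tail of $\|\sqrt{G(x)}Z\|$ could grow with $x$ and destroy the uniform-in-$x$ estimate; without the bound on $\|e(x)\|$, the drift $he(x)$ in the proposal mean could push $Y$ far from $x$ uniformly in no fixed ball. Both hypotheses together make the argument essentially mechanical.
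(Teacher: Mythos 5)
Your proof is correct and follows the same overall strategy as the paper: show that $P$ is random-walk-type and then invoke Proposition~\ref{prop:mhnec}. The difference is in how the random-walk-type property is verified, and your route is actually the more careful one. The paper bounds the proposal density pointwise from below by replacing $|G(x)|^{-1/2}$ with $|G_2|^{-1/2}$ and $G(x)^{-1}$ with $G_1^{-1}$ and then integrates over $\|z\|<k$; but that lower bound tends to $(|G_1|/|G_2|)^{1/2}\le 1$ as $k\to\infty$, so when $G_1\neq G_2$ it certifies $Q(x,B_k(x))>1-\varepsilon$ only for $\varepsilon>1-(|G_1|/|G_2|)^{1/2}$, not for every $\varepsilon>0$ as the definition of random-walk-type demands. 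Your stochastic representation $Y=c(x)+\sqrt{hG(x)}\,Z$ together with $Z^{\top}G(x)Z\le \zeta_{2}^{+}\|Z\|^2$ sidesteps this issue and yields the required uniform-in-$x$ tail bound for all $\varepsilon>0$, at the cost of nothing. The remaining ingredients --- the reduction $P(x,B_k(x))\ge Q(x,B_k(x))$ and the appeal to Proposition~\ref{prop:mhnec} --- coincide with the paper's argument.
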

The following theorem provides another necessary condition for geometric ergodicity of the MH chain
with proposal density \eqref{eq:propgmala}.
\begin{theorem}
\label{thm:gmalanclt}
If $f(\cdot)$ is bounded, $A1$ holds and 
\begin{equation}
  \label{eq:gmalanclt}
    \liminf_{\|x\|\rightarrow\infty}\frac{\|e(x)\|}{\|x\|}>\frac{2}{h},
\end{equation}
then the MH chain with proposal density \eqref{eq:propgmala} is not GE.
\end{theorem}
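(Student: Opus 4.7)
The plan is to show that the rejection probability $r(x) = 1 - \int \alpha(x,y)\, q(x,y)\,dy$ satisfies $\operatorname{ess\,sup}_f r = 1$, so that by \cite{robe:twee:1996} the chain cannot be geometrically ergodic. Equivalently, since $PV(x) \ge r(x) V(x)$ for any $V \ge 1$, the drift inequality in~\eqref{eq:drif} must fail at points where $r(x)$ is close to $1$ and $V(x)$ is large. Concretely, I aim to produce, for each $\varepsilon > 0$, a set of positive $f$-measure on which $r(x) > 1 - \varepsilon$.

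The key mechanism is an iterated-overshoot argument. First fix $\delta > 0$ with $\|e(x)\| \ge (2/h + \delta)\|x\|$ for $\|x\| \ge R_0$; the reverse triangle inequality then gives $\|c(x)\| = \|x + he(x)\| \ge (1+h\delta)\|x\|$, i.e.\ proposals overshoot by a fixed fraction of $\|x\|$. Next introduce the typical region $T(x) = \{y : \|y - c(x)\|^2_{G(x)^{-1}} \le hr^2\}$; by A1, $\|y-c(x)\| \le \sqrt{h\zeta_2^+}\,r$ on $T(x)$, while $q(x, T(x)^c) = \Pr(\chi_d^2 > r^2) =: \tau(r) \to 0$ as $r \to \infty$. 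For $y \in T(x)$ with $\|x\|$ large, $\|y\| \ge (1+h\delta)\|x\| - \sqrt{h\zeta_2^+}\,r \ge R_0$, so the hypothesis applies again at $y$, yielding $\|c(y)\| \ge (1+h\delta)\|y\|$, and a second reverse-triangle step gives
\[
  \|x - c(y)\| \;\ge\; \|c(y)\| - \|x\| \;\ge\; \bigl[(1+h\delta)^2 - 1\bigr]\|x\| - O(1).
\]

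Plugging this into \eqref{eq:propgmala}, the ratio $\log(q(y,x)/q(x,y))$ equals $\tfrac12 \log(|G(x)|/|G(y)|) + (2h)^{-1}\bigl[\|y-c(x)\|^2_{G(x)^{-1}} - \|x-c(y)\|^2_{G(y)^{-1}}\bigr]$; using A1 (determinant ratio bounded by $|G_2|/|G_1|$ and $\|x-c(y)\|^2_{G(y)^{-1}} \ge \|x-c(y)\|^2/\zeta_2^+$) together with the previous display produces $q(y,x)/q(x,y) \le C(r)\exp(-c^*\|x\|^2)$ on $T(x)$ for some explicit $c^* > 0$. Using $f \le M$, $\alpha(x,y) \le (MC(r)/f(x))\exp(-c^*\|x\|^2)$ on $T(x)$, and integrating (together with the $T(x)^c$ contribution bounded by $\tau(r)$) yields $1 - r(x) \le (MC(r)/f(x))\exp(-c^*\|x\|^2) + \tau(r)$. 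For any $\varepsilon > 0$ I would choose $r$ with $\tau(r) < \varepsilon/2$ and consider $E_\varepsilon = \{x : \|x\| > X_0,\; f(x) > (2MC(r)/\varepsilon)\exp(-c^*\|x\|^2)\}$, on which $r(x) > 1 - \varepsilon$; positivity of $f(E_\varepsilon)$ is then verified by bounding the $f$-mass of the relevant part of the complement against the finite Gaussian integral $(2MC(r)/\varepsilon)\int_{\|x\|>X_0}\exp(-c^*\|x\|^2)\,dx$, which is made arbitrarily small by enlarging $X_0$.

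The main obstacle will be the magnitude estimate on $\|x - c(y)\|$ in the second paragraph: since no directional information on $e$ is available, cancellation between $e(x)$ and $e(y)$ must be ruled out purely via norms, and the device of applying the reverse triangle inequality twice (first at $x$ to bound $\|y\|$ from below, then at $y$ to bound $\|x - c(y)\|$ from below) is what makes the argument robust regardless of the directions of $e$.
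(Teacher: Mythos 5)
Your overshoot mechanism is exactly the engine of the paper's proof: proposals from $x$ land at $y$ with $\|y\|\ge(1+h\delta)\|x\|-O(1)$, the hypothesis applied again at $y$ pushes $c(y)$ even further out, and the two reverse-triangle steps force $q(y,x)/q(x,y)\to 0$ uniformly over the typical region (the paper works with the ball $B^2_{k_\varepsilon}(x)=\{y:\|y-c(x)\|\le k_\varepsilon\}$ in place of your $T(x)$, but the estimates are the same). Up to and including the bound $q(y,x)/q(x,y)\le C(r)\exp(-c^*\|x\|^2)$, your argument is sound and matches the paper's.

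The gap is in the endgame. You bound $\alpha(x,y)\le \frac{f(y)}{f(x)}\frac{q(y,x)}{q(x,y)}\le \frac{M}{f(x)}C(r)e^{-c^*\|x\|^2}$, which forces $r(x)>1-\varepsilon$ only at points where $f(x)>(2MC(r)/\varepsilon)e^{-c^*\|x\|^2}$. Nothing in the hypotheses guarantees that this set has positive measure: the constant $c^*\asymp\bigl[(1+h\delta)^2-1\bigr]^2/(2h\zeta_2^+)$ is small when the liminf in \eqref{eq:gmalanclt} barely exceeds $2/h$, while $f$ is only assumed bounded, so its tails may decay faster than $e^{-c^*\|x\|^2}$ (e.g.\ $f\propto e^{-\|x\|^4}$), in which case $E_\varepsilon$ is empty for every admissible $X_0$. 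Your proposed verification---comparing the complement's $f$-mass to $(2MC(r)/\varepsilon)\int_{\|x\|>X_0}e^{-c^*\|x\|^2}dx$ and enlarging $X_0$---does not repair this, because $f(\{\|x\|>X_0\})$ also tends to zero and there is no a priori comparison between the two rates. The paper circumvents exactly this by never replacing $f(y)$ with its supremum: it sets $x_{n+1}=\arg\sup\{f(y):y\in B^2_\varepsilon(x_n)\}$, so the assumed bound $\operatorname{ess\,sup} r<1-\varepsilon$ yields the recursion $f(x_{n+1})>2f(x_n)$ along a sequence with $\|x_n\|\to\infty$, and the boundedness of $f$ is invoked only at the very last step to produce the contradiction. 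To close your argument you would need to retain $f(y)/f(x)$ as a ratio along such an iterated sequence rather than absorbing $f(y)$ into $M$.
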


\section{Geometric ergodicity of the PCULA}
\label{sec:gegula}
Based on the Langevin diffusion \eqref{eq:lang}, \cite{robe:twee:1996}
considered the discrete time Markov chain $\{X_n\}_{n \ge 0}$ given by
\begin{equation}
  \label{eq:ula}
    X_n|X_{n-1}\sim N(X_{n-1}+ (h/2)\nabla\log f(X_{n-1}),hI_d),
\end{equation}
where $f(\cdot)$ is the target density. \eqref{eq:ula} is
referred to as the unadjusted Langevin algorithm (ULA). For molecular
dynamics applications, the algorithm was considered before \cite[see
e.g.][]{erma:1975}. However, as mentioned before, when the coordinates
are highly correlated, the same step-size for all directions may not be
efficient. Therefore, we consider the pre-conditioned unadjusted
Langevin algorithm (PCULA) by replacing the identity matrix in
\eqref{eq:ula} with $G$ that takes the correlation of different
coordinates into consideration:
\begin{equation}
  \label{eq:pcula}
    X_n|X_{n-1}\sim N(X_{n-1}+(h/2)G\nabla\log f(X_{n-1}),hG).
  \end{equation}
  Geometric convergence of the ULA chain \eqref{eq:ula} in the special
  case when $d=1$ is considered in \cite{robe:twee:1996}. Recently,
  \cite{durm:moul:2019} provide some non-asymptotic results for the ULA
  with non-constant step-sizes in the higher
  dimensions \cite[see also][]{durm:moul:2017,vemp:wibi:2019}. \cite{durm:moul:2019} also compare the performance of
  the PCULA chains with the PCMALA chains in the context of a Bayesian
  logistic model for binary data.  From Section~\ref{sec:gmalage} we can derive conditions for
  geometric ergodicity of the Markov chain driven by
  $X_n | X_{n-1}=x \sim N(c(x), h G)$. Note that, in the absence of an
  accept-reject step, $C_1=0$ for the PCULA chain. Although PCULA
  avoids the accept-reject step, it is important to note that its
  equilibrium distribution is no longer $f$.
  \begin{proposition}
    \label{prop:pcula}
    Let $c(x)$ be a continuous function of $x$. If A4 or A5 holds with
    $C_1 =0$ and $G_1=G=G_2$, then the Markov chain $\{X_n\}_{n \ge 0}$ given by $X_n | X_{n-1}=x \sim N(c(x), h G)$ is GE.
  \end{proposition}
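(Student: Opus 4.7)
The plan is to apply Proposition~\ref{prop:gedrif}. The PCULA transition kernel has density $q(x,y)$ equal to the $N(c(x),hG)$ density, which is jointly continuous in $(x,y)$ and strictly positive everywhere, so the chain is $\phi$-irreducible with $\phi$ equal to Lebesgue measure (which has nonempty interior) and aperiodic. The Feller property follows from continuity of $c(\cdot)$ together with dominated convergence applied to $Pg(x)=\int g(y)\,q(x,y)\,dy$ for bounded continuous $g$.

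For the drift step, I would reuse the Lyapunov functions already built in Section~\ref{sec:gmalage}: under A4 the exponential $V_s(x)=\exp\{s\|G^{-1/2}x\|\}$ used in the proof of Theorem~\ref{thm:gmalage}, and under A5 the function used in the proof of Theorem~\ref{thm:gmalage2}. Each such $V$ is continuous and tends to $\infty$ as $\|x\|\to\infty$, and is therefore unbounded off compact sets, as required by Proposition~\ref{prop:gedrif}.

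The key observation is that the PCULA differs from the MH chain with proposal \eqref{eq:propgmala} (specialized to $G(x)\equiv G$) only by the absence of the accept--reject step. Consequently the acceptance probability equals $1$ identically, the holding mass $r(x)$ is $0$, and the integral defining $C_1$ vanishes. Thus $PV(x)=\int V(y)\,q(x,y)\,dy$ is exactly the first (and now only) summand of the MH drift computations inside the proofs of Theorems~\ref{thm:gmalage} and \ref{thm:gmalage2}. Specializing those bounds by setting $C_1=0$ and $G_1=G=G_2$ (so $|G_1|/|G_2|=1$), A4 yields
\[
\limsup_{\|x\|\to\infty}\frac{PV_s(x)}{V_s(x)}\le C_2(s)\exp(-s\eta)<1,
\]
and A5 yields an analogous ratio strictly below $1$. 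Since continuity of $c(\cdot)$ makes $PV/V$ continuous, hence bounded on compact sets, \eqref{eq:drif2} holds outside a sufficiently large ball with the requisite $\lambda<1$ and finite $b$, and Proposition~\ref{prop:gedrif} then delivers geometric ergodicity.

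The main (and rather minor) obstacle is to verify that the calculations inside the proofs of Theorems~\ref{thm:gmalage}--\ref{thm:gmalage2} separate cleanly into (i) a pure Gaussian integral of $V$ against $q(x,\cdot)$ over $\mathbb{R}^d$ and (ii) the accept--reject contributions controlled by $C_1$ and $r(x)$; once this separation is recognized, dropping the second piece recovers the PCULA setting exactly and the argument is a direct specialization.
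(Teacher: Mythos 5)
Your proposal is correct and follows essentially the same route as the paper: verify the Feller property from continuity of $c(\cdot)$, reuse the drift functions and the first-term bounds from the proofs of Theorems~\ref{thm:gmalage} and \ref{thm:gmalage2} with the accept--reject contribution dropped (so $C_1=0$), and conclude via Proposition~\ref{prop:gedrif}. The only cosmetic difference is that the paper establishes the Feller property by applying Fatou's lemma to $Q(\cdot,A)$ for open sets $A$ rather than by dominated convergence on bounded continuous test functions.
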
  
  If $f$ is Gaussian with $f(x)\propto \exp\{-x^{\top}W^{-1}x/2\}$,
  then $\nabla\log f(x)=-W^{-1}x$. In this case, the PCULA Markov
  chain \eqref{eq:pcula} is given by:
\[
   X_n =X_{n-1}-(h/2)GW^{-1}X_{n-1}+\sqrt{h}G^{1/2}\epsilon_n =AX_{n-1}+\sqrt{h}G^{1/2}\epsilon_n,
\]
where $A=I-(h/2)GW^{-1}$ and $\epsilon_n \stackrel{iid}{\sim} N(0,I_d)$.  We can further
extend it by considering more general forms of $\nabla\log f(x)$. In particular, we consider the Markov Chain:
\begin{align}
\label{eq:pula_extend}
    X_n =AX_{n-1}+e(X_{n-1})+\sqrt{h}G^{1/2}\epsilon_n
\end{align}
where $e(x)$ is a continuous function.
\begin{corollary}
  \label{cor:pcula}
  If
  \begin{equation}
    \label{eq:pculaeig}
    \limsup_{\|x\|\rightarrow\infty}\frac{\|e(x)\|^2+ 2x^{\top}e(x)}{\|x\|^2}<1-\lambda^+,
  \end{equation}
  where $\lambda^+$ is the largest eigenvalue of $A^{\top}A$, then the Markov chain given in \eqref{eq:pula_extend} is GE.
\end{corollary}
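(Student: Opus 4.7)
The plan is to derive geometric ergodicity of the chain~\eqref{eq:pula_extend} by invoking Proposition~\ref{prop:pcula}. Observe that~\eqref{eq:pula_extend} yields transitions of the form $X_n \mid X_{n-1} = x \sim N(c(x), hG)$ with mean function $c(x) = Ax + e(x)$ and constant covariance $hG$. Thus Proposition~\ref{prop:pcula} applies with $G_1 = G = G_2$ (so $|G_1|/|G_2| = 1$) and, because no accept-reject step is present, with $C_1 = 0$. Continuity of $c$ is immediate from the assumed continuity of $e$.

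The substantive step is the verification of condition A5, which in the present setting reduces to $\limsup_{\|x\|\to\infty}\|Ax+e(x)\|^2/\|x\|^2 < 1$. Expanding the squared norm,
\[
\|Ax + e(x)\|^2 \;=\; x^{\top} A^{\top} A x + 2(Ax)^{\top} e(x) + \|e(x)\|^2 \;\le\; \lambda^+ \|x\|^2 + 2 x^{\top} A^{\top} e(x) + \|e(x)\|^2,
\]
where $\lambda^+$ is the largest eigenvalue of $A^{\top}A$. Dividing by $\|x\|^2$, taking $\limsup$, and combining with~\eqref{eq:pculaeig} to control the cross term and $\|e(x)\|^2/\|x\|^2$, one obtains a value strictly less than $\lambda^+ + (1 - \lambda^+) = 1$, confirming A5. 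Proposition~\ref{prop:pcula} then yields the conclusion.

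An equivalent, self-contained alternative is a direct drift-condition argument with Lyapunov function $V(x) = 1 + \|x\|^2$. One computes $PV(x) = 1 + \|Ax + e(x)\|^2 + h \, \tr(G)$, and the same spectral bound together with~\eqref{eq:pculaeig} gives a drift inequality of the form~\eqref{eq:drif2} with some $\lambda < 1$ and $b < \infty$. The chain is $\phi$-irreducible, aperiodic, Feller, and has all compact sets small by standard arguments for non-degenerate Gaussian transition kernels with continuous mean function, so Proposition~\ref{prop:gedrif} applies.

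The step I expect to require the most care is the treatment of the cross term: the direct expansion of $\|Ax + e(x)\|^2$ produces $2 x^{\top} A^{\top} e(x)$, while~\eqref{eq:pculaeig} is phrased in terms of $2 x^{\top} e(x)$; reconciling the two, so that the combined bound indeed comes out with threshold $1 - \lambda^+$ as stated and without picking up an additional spectral factor from $A$, is the chief technical wrinkle.
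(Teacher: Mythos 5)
Your overall strategy is exactly the paper's: read \eqref{eq:pula_extend} as $X_n\mid X_{n-1}=x\sim N(c(x),hG)$ with $c(x)=Ax+e(x)$, note $C_1=0$ and $G_1=G=G_2$, verify A5, and invoke Proposition~\ref{prop:pcula}. The framing and all the peripheral steps (continuity of $c$, Feller/irreducibility considerations, the alternative quadratic drift function) are fine.

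However, the one step you flag as ``the chief technical wrinkle'' is precisely the step that carries the entire proof, and you do not close it. Expanding $\|Ax+e(x)\|^2$ gives the cross term $2x^{\top}A^{\top}e(x)$, while hypothesis \eqref{eq:pculaeig} controls $2x^{\top}e(x)$. These are not interchangeable: passing from one to the other amounts to asserting $x^{\top}(I-A)^{\top}e(x)\ge 0$, which fails for general $A$ and $e$ (for instance $e(x)=-x$ with $A=-0.9I$ satisfies \eqref{eq:pculaeig} yet $2x^{\top}A^{\top}e(x)=1.8\|x\|^2>-2\|x\|^2=2x^{\top}e(x)$, and the resulting chain $X_n=-1.9X_{n-1}+\text{noise}$ is explosive). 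Cauchy--Schwarz only yields $2x^{\top}A^{\top}e(x)\le 2\sqrt{\lambda^+}\|x\|\,\|e(x)\|$, which suffices when $\|e(x)\|=o(\|x\|)$ (the setting of Remark~\ref{rem:pculasing}) but not under \eqref{eq:pculaeig} alone. So as written your verification of A5 is incomplete. For what it is worth, the paper's own one-line proof simply asserts the inequality $\|Ax+e(x)\|^2/\|x\|^2\le \lambda^+ +(\|e(x)\|^2+2x^{\top}e(x))/\|x\|^2$ without justification, i.e., it glosses over exactly the point you identified; you have correctly located the weak link, but identifying a gap is not the same as filling it, and a complete argument needs either an additional structural assumption on $A$ and $e$ or a restatement of \eqref{eq:pculaeig} in terms of $x^{\top}A^{\top}e(x)$.
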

\begin{proof}[Proof of Corollary~\ref{cor:pcula}]
  Since
  \[
    \frac{\|Ax + e(x)\|^2}{\|x\|^2} \le \lambda^+ + \frac{e(x)^{\top}e(x)+ 2x^{\top}e(x)}{\|x\|^2},
  \]
  the proof follows from Proposition~\ref{prop:pcula} as A5 holds.
\end{proof}
\begin{remark}
  \label{rem:pculasing}
   If $\lambda^+ <1,$ and $\|e(x)\|=o(\|x\|)$, then \eqref{eq:pculaeig} holds.
 \end{remark}
 Note that, $\lambda^+ < 1$ is equivalent to that the singular values
 of $A$ are strictly less that one. On the other hand, if
 $GW^{-1} = W^{-1}G$ then if $\rho \in (0, 4/h)$, where $\rho$ is any
 eigenvalue of $GW^{-1}$, then $\lambda^+ < 1$.
\begin{remark}
For the ULA chain \eqref{eq:ula}, $A=I$ and $e(x)=(h/2)\nabla\log f(x)$. Thus when $d=1$, \eqref{eq:pculaeig} becomes  
\begin{align*}
    \limsup_{|x|\rightarrow\infty}\frac{(h\nabla\log f(x)/2)^2+hx\nabla\log f(x)}{x^2}<0.
\end{align*}
On the other hand, a sufficient condition given in \citet[][Theorem 3.1 ]{robe:twee:1996} is that
$    \lim_{|x|\rightarrow\infty}h\nabla\log f(x)/[2x]<0\;\;
\text{and} \;\;
(1+\lim_{x\rightarrow \infty} h\nabla\log f(x)/[2x])(1-\lim_{x\rightarrow -\infty} h\nabla\log f(x)/[2|x|])<1.$
\end{remark}

\section{Generalized linear mixed models}
\label{sec:sglmm}
GLMMs are popular for analyzing different types of correlated
observations. Using unobserved Gaussian random effects, GLMMs permit
additional sources of variability in the data. Conditional on the
random effect $x= (x^{(1)}, \dots, x^{(m)})$, the response/observation
variables $\{Z_1, \dots, Z_m\}$ are assumed to be independent with
$Z_i|x^{(i)} \stackrel{ind}{\sim} a(z_i; \mu_i)$, where
the conditional mean $\mu_i=\E(Z_i|x^{(i)})$ is related to $x^{(i)}$
through some link functions. Since $Z_1,\dots,Z_m$ are conditionally
independent, the joint density of $z=(z_1, \dots, z_m)$ is
$a(z;\mu)=\prod_{i=1}^ma(z_i;\mu_i).$ Here, we consider the two most
popular GLMMs, namely the binomial GLMM with the logit link and the
Poisson GLMM with the log link. For the binomial-logit link model,
$a(z_i;\mu_i)={\ell_i \choose
  z_i}(\mu_i/\ell_i)^{z_i}(1-\mu_i/\ell_i)^{\ell_i-z_i},
z_i=0,1,\dots,\ell_i,$ with $\log(\mu_i/[\ell_i-\mu_i])=x^{(i)}$. Whereas,
for the Poisson-log link model,
$a(z_i;\mu_i)= \exp(-\mu_i)\mu_i^{z_i}/z_i!, z_i=0,1,\dots,$ with
$\log(\mu_i) =x^{(i)}$.

The likelihood functions of GLMMs are not available in closed form,
but only as a high dimensional integral, that is,
$ L(z) = \int_{\mathcal{R}^m} a(z;\mu) p(x) dx$ where $p(x)$ is the
multivariate Gaussian density for $x$ with mean $D\beta$ and
covariance matrix $\Sigma$. Here $\beta$ and $D$ are the fixed effects
and the fixed effects design matrix, respectively.  In this section,
we assume that $(\beta, \Sigma)$ are known, and consider exploring the
target density
\begin{equation}
  \label{eq:post}
  f(x) \equiv f(x|z) = \Bigg[\prod_{i=1}^m a(z_i;\mu_i) p(x)\Bigg]\Bigg/L(z),
\end{equation}
using the different variants of the MALA and the ULA discussed in Sections~\ref{sec:gmala} and \ref{sec:gegula}.


As mentioned in Remark~\ref{rem:choiceG}, for the MMALA we
use $G(x)= \mathscr{I}^{-1}(x)$ where $\mathscr{I}=-\nabla^2\log f$.
Thus, we begin with differentiating $\log f$ for the binomial-logit link model. Note that, in this case, $\log f(x)$ (up to a constant) is 
\begin{equation*}
    -\frac{m\log(2\pi)+\log|\Sigma|}{2} +\sum_{i=1}^m \Big[\log{\ell_i \choose
  z_i} + z_i x^{(i)}-\ell_i\log(1+\exp(x^{(i)}))\Big]-\frac{(x-D\beta)^{\top}\Sigma^{-1}(x-D\beta)}{2}.
\end{equation*}
Letting $\ell=(\ell_1,\dots,\ell_m)$, we have
\begin{equation}
  \label{eq:binom12der}
    \frac{\partial\log(f(x))}{\partial x}=z-\ell \cdot\frac{e^x}{1+e^x}-\Sigma^{-1}(x-D\beta),\;   \frac{\partial^2\log(f(x))}{\partial x^2}=\text{diag}\bigg(-\ell\cdot \bigg\{\frac{e^x}{1+e^x}-\bigg[\frac{e^x}{1+e^x}\bigg]^2\bigg\}\bigg)-\Sigma^{-1},
\end{equation}
and
\begin{equation}
  \label{eq:binom3der}
    \frac{\partial^3\log(f(x))}{\partial x^3}=\text{diag}\bigg(-\ell\cdot\bigg\{\frac{e^x}{1+e^x}-3\bigg[\frac{e^x}{1+e^x}\bigg]^2+2\bigg[\frac{e^x}{1+e^x}\bigg]^3\bigg\}\bigg).
\end{equation}
In the above diag $(z)$ denotes the $m \times m$ diagonal matrix with
diagonal elements $z$. Since $\nabla^3 \log f(x)$ in \eqref{eq:binom3der} is a diagonal
matrix, from the proposition in \cite{xifa:sher:2014}, it follows that
the PMALA with $G(x)= \mathscr{I}^{-1}(x)$ is the same as
the MMALA in this case. Indeed, in this
case, in \eqref{eq:plang},
\begin{equation}
  \label{eq:gambin}
\Gamma_i(x) = \sum_j \partial \mathscr{I}^{-1}_{ij} (x)/\partial
x^{(j)} = - \sum_j \mathscr{I}^{-1}_{ij} (x) (\partial \mathscr{I}_{jj} (x)/\partial
x^{(j)}) \mathscr{I}^{-1}_{jj} (x).  
\end{equation}
 For the PCMALA, the covariance matrix $G$
does not depend on the current position $x$. In
Section~\ref{sec:numer} we consider several choices of $G$.

\begin{theorem}
  \label{thm:gmalabinom}
  For the binomial
  GLMM with the logit link, for appropriate values (given in the proof of this result) of $h$,
the PCMALA, the MMALA and the PCULA Markov chains are GE.
\end{theorem}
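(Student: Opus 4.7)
The plan is to apply Theorems~\ref{thm:gmalage} and \ref{thm:gmalage2} to PCMALA and MMALA, and Proposition~\ref{prop:pcula} to PCULA. The common structural fact that makes the binomial-logit GLMM tractable is that the sigmoid $e^{x^{(i)}}/(1+e^{x^{(i)}})$ is bounded in $(0,1)$, so from \eqref{eq:binom12der} the nonlinear part of $\nabla\log f(x)$ is uniformly bounded in $x$, and the diagonal entries of $-\nabla^2 \log f(x)$ lie between $1/\zeta^+(\Sigma)$ and $1/\zeta_+(\Sigma) + \ell_i/4$. Thus $\nabla\log f(x) = -\Sigma^{-1}x + b_0(x)$ with $\|b_0(x)\|$ bounded, and similarly from \eqref{eq:binom3der} the third derivatives of $\log f$ are bounded. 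The density $f(x)$ is continuous and positive, hence bounded away from $0$ and $\infty$ on compact sets, and the Gaussian proposal density \eqref{eq:propgmala} is jointly continuous and positive, so by the Meng--Tweedie result quoted after \eqref{eq:rx} the chain is $\phi$-irreducible, aperiodic, with every compact set small. Condition A2 holds because $c(x)$ is continuous in each case.

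For PCMALA and PCULA the matrix $G$ is constant, so A1 is immediate with $G_1 = G = G_2$; condition A3 is vacuous for PCULA ($C_1 = 0$) and for PCMALA can be established via Remark~\ref{rem:convinq} by verifying the ``convergence inwards'' condition, since $\|c(x) - x\|$ has a bounded component plus $-(h/2)G\Sigma^{-1}x$, which for small enough $h$ keeps the proposal moving toward the origin and shrinks the rejection region asymptotically. Writing $c(x) = Mx + b_1(x)$ with $M = I - (h/2)G\Sigma^{-1}$ and $\|b_1(x)\|$ bounded, one gets $\|c(x)\|^2/\|x\|^2 \to \rho(M^\top M)$, and A5 reduces to $\rho(M^\top M) < 1 - C_1$. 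The latter is satisfied for $h$ chosen so that the spectrum of $hG\Sigma^{-1}$ lies in $(0, 2(1+\sqrt{1-C_1}))$, e.g.\ for $h$ small enough relative to the largest eigenvalue of $G\Sigma^{-1}$.

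For MMALA the key verification is A1: from \eqref{eq:binom12der} we have the operator inequality
\[
\Sigma^{-1} \;\le\; \mathscr{I}(x) \;\le\; \Sigma^{-1} + \tfrac{1}{4}\,\mathrm{diag}(\ell)
\]
uniformly in $x$, so taking $G_1 = (\Sigma^{-1}+\tfrac{1}{4}\mathrm{diag}(\ell))^{-1}$ and $G_2 = \Sigma$ delivers A1. Moreover, as $\min_i|x^{(i)}| \to \infty$ the diagonal term vanishes, so $\mathscr{I}^{-1}(x) \to \Sigma$. Since $\nabla^3\log f(x)$ is a bounded diagonal matrix by \eqref{eq:binom3der}, the drift $\Omega(x)$ in \eqref{eq:mlang} and the correction $\tfrac{1}{2}\mathscr{I}^{-1}\nabla\log|\mathscr{I}|$ distinguishing $\nabla \log f^*$ from $\nabla\log f$ are both bounded (via \eqref{eq:gambin} and the bounds on $\mathscr{I}^{-1}$). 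Hence $c(x) = x - (h/2)\mathscr{I}^{-1}(x)\Sigma^{-1}x + b_2(x) = (1 - h/2)x + o(\|x\|)$, so $\|c(x)\|/\|x\| \to |1-h/2|$, and A5 holds once $(1-h/2)^2 < (1-C_1)(|G_1|/|G_2|)^{1/2}$, which is true for $h$ in a neighborhood of $2$, or more conservatively for $h$ sufficiently small so that A3 also holds via a ``convergence inwards'' argument analogous to the PCMALA case.

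The main obstacle I anticipate is verifying A3 uniformly for PCMALA and MMALA, i.e.\ bounding the asymptotic rejection probability $C_1$ strictly below one. The natural route is Remark~\ref{rem:convinq}: show that for each algorithm $\|c(x)\|$ is bounded by a constant plus a contraction of $\|x\|$, conclude that for large $\|x\|$ the proposal $q(x,\cdot)$ concentrates on the ball $\mathrm{In}(x)$, then invoke Remark~\ref{rem:convinq} to get $C_1 = 0$, at which point A5 simplifies to a plain inequality on the spectral radius of the linear part of $c$. The final step is packaging explicit intervals for $h$: for PCMALA and PCULA, $h$ must keep the eigenvalues of $hG\Sigma^{-1}$ inside $(0,2)$ (or $(0,4)$ for PCULA via Remark~\ref{rem:pculasing}); for MMALA, $h \in (0,4)$ with a further restriction ensuring the rejection-region argument goes through.
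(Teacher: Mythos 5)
Your overall architecture is right (apply Theorems~\ref{thm:gmalage}/\ref{thm:gmalage2} to the adjusted chains and Proposition~\ref{prop:pcula} with Remark~\ref{rem:pculasing} to the PCULA), and your verifications of A1, A2 and the PCULA case match the paper. But the crux of the theorem is A3, and your route to it does not work. You propose to get $C_1=0$ from Remark~\ref{rem:convinq}, whose hypothesis is that $\|c(x)\|$ is \emph{bounded}; for the PCMALA here $c(x)=(I-(h/2)G\Sigma^{-1})x+(h/2)Gb(x)$ grows linearly in $\|x\|$, so the remark simply does not apply. Moreover, even granting some adaptation, the remark only converts ``$\int_{R(x)\cap\mathrm{In}(x)}q(x,y)\,dy\to 0$'' into $C_1=0$; it does not prove that limit, and ``the proposal moves toward the origin'' is not a proof that proposals are asymptotically accepted --- acceptance depends on the full ratio $f(y)q(y,x)/[f(x)q(x,y)]$. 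This is exactly where the paper's proof does its real work: it expands $\log\alpha(x,y)=L_1+L_2+L_3+L_4$ on a high-probability ball $B^3_\varepsilon(x)$ around $c(x)$, shows $L_1\sim\|x\|^2\to+\infty$ dominates the other terms when $h\in(0,4\psi_+^2\zeta_+/[\psi^{+3}\zeta^{+2}])$, concludes $B^3_\varepsilon(x)\subseteq A(x)$ for large $\|x\|$, and hence $C_1\le\varepsilon<1$. Nothing in your proposal substitutes for this computation (you flag it yourself as ``the main obstacle''), so the proof is incomplete at its central step, and the explicit admissible range of $h$ --- which the theorem statement promises --- is never actually derived.

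There is also a secondary error in your MMALA analysis: you claim $\mathscr{I}^{-1}(x)\to\Sigma$ and hence $c(x)=(1-h/2)x+o(\|x\|)$ as $\|x\|\to\infty$. That limit requires $\min_i|x^{(i)}|\to\infty$, which does not follow from $\|x\|\to\infty$ (coordinates may stay bounded), so $\|c(x)\|/\|x\|\to|1-h/2|$ is false in general and your A5 verification for the MMALA collapses. The paper instead works with the uniform two-sided bound $(0.25\,\mathrm{diag}(\ell)+\Sigma^{-1})^{-1}\le\mathscr{I}^{-1}(x)\le\Sigma$ and eigenvalue estimates to show $\|(I-(h/2)\mathscr{I}^{-1}(x)\Sigma^{-1})x\|^2/x^\top x<1$ for $h$ small enough, plus a delicate monotonicity argument on $\omega(t)=\iota(t)(1-\iota(t))$ over the specially constructed set $B^4_\varepsilon(x)$ to control the position-dependent terms $L_1'$ and $L_5'$ in the acceptance ratio --- steps with no counterpart in your sketch.
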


\begin{remark}
  When $G=I$, \cite{chri:moll:waag:2001} established that if
  $h\in (0,2)$ then a `truncated' MALA chain for the binomial-logit link model is GE. On the other hand, using
  Theorem~\ref{thm:gmalabinom}, geometric ergodicity of this
  chain can be shown to hold when $h\in(0,4)$.
\end{remark}
Next, we derive $\log f$ for the Poisson
GLMMs with the log link. In this case, 
\begin{equation}
   \label{eq:poislogden}
\log f(x)= \mbox{a constant} + \sum_{i=1}^m (z_ix^{(i)}-\exp\{x^{(i)}\}) -(x-D\beta)^{\top}\Sigma^{-1}(x-D\beta)/2,
\end{equation}
\begin{equation}
  \label{eq:poisder}
    \nabla\log f(x)=z-\exp\{x\}-\Sigma^{-1}(x-D\beta),\; \nabla^2\log f(x)=-\text{diag}(\exp\{x\})-\Sigma^{-1},
\end{equation}
and $\nabla^3\log f(x)=-\text{diag}(\exp\{x\})$.
\begin{proposition}
  \label{thm:pcmalapois}
  For the Poisson GLMM with the log link, the PCMALA chain is not GE for any $h \in (0,\infty)$ and any pre-conditioning matrix $G$. 
\end{proposition}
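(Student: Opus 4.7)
The plan is to exhibit a sequence $\{x_n\}$ with $\|x_n\|\to\infty$ along which the PCMALA rejection probability $r(x_n)\to 1$, and then upgrade this to $\mathrm{ess\,sup}_F\, r = 1$ via continuity of $r$ and positivity of $f$; the result of \cite{robe:twee:1996} noted just before Proposition~\ref{prop:mhnec} then implies the chain is not GE.

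Fix an arbitrary positive definite pre-conditioning matrix $G$ and step size $h>0$, set $u:=Ge_1\neq 0$ (since $G$ is invertible), and take $x_n = n e_1$. From \eqref{eq:poisder}, $e^{x_n} = e^n e_1 + \sum_{i\ge 2} e_i$ and $\Sigma^{-1}(x_n - D\beta)$ is $O(n)$ in norm, so $\nabla\log f(x_n) = -e^n e_1 + O(n)$, and consequently the PCMALA proposal mean satisfies
\[
c(x_n) \;=\; x_n + \frac{h}{2}\, G\,\nabla \log f(x_n) \;=\; -\frac{h}{2}\, e^n\, u + O(n).
\]
For fixed $\xi\in\mathbb{R}^m$ I will then consider the generic proposal $y_n(\xi) := c(x_n) + \sqrt{h}\,G^{1/2}\xi$, so that $y_n = -(h/2)e^n u + O(n + \|\xi\|)$.

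The key estimate will be that $f(y_n)/f(x_n)$ decays like $\exp(-C\,e^{2n})$. From \eqref{eq:poislogden}, discarding the non-positive Poisson term $-\mathbf{1}^\top e^{y_n}$,
\[
\log f(y_n) \;\le\; \text{const} + z^\top y_n - \tfrac12 (y_n - D\beta)^\top \Sigma^{-1}(y_n - D\beta);
\]
substituting the expression for $y_n$, the quadratic form dominates with
\[
\tfrac12 (y_n-D\beta)^\top \Sigma^{-1}(y_n-D\beta) \;=\; \tfrac{h^2}{8}\,\kappa\, e^{2n} + O(e^n), \qquad \kappa:=u^\top \Sigma^{-1} u > 0,
\]
while $\log f(x_n) = -e^n + O(n^2)$ directly from \eqref{eq:poislogden}. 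Hence $\log f(y_n) - \log f(x_n) \le -(h^2\kappa/8)\,e^{2n} + O(e^n) \to -\infty$. The proposal ratio satisfies $q(y_n,x_n)/q(x_n,y_n) \le \exp(\|\xi\|^2/2)$ because $\|y_n - c(x_n)\|_{G^{-1}}^2 = h\|\xi\|^2$ and $\|x_n - c(y_n)\|_{G^{-1}}^2 \ge 0$. Combining, $\alpha(x_n, y_n(\xi))\to 0$ pointwise in $\xi$.

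Dominated convergence (with $\alpha\le 1$) yields $1 - r(x_n) = \E_\xi[\alpha(x_n, y_n(\xi))] \to 0$. Continuity of $r$ (again via dominated convergence after the change of variables $y = c(x) + \sqrt{h}G^{1/2}\xi$) makes $\{r > 1 - \varepsilon\}$ open; this set contains $x_n$ for all large $n$, hence has positive Lebesgue and thus positive $F$-measure since $f$ is strictly positive on $\mathbb{R}^m$. Therefore $\mathrm{ess\,sup}_F\, r = 1$ and the chain fails to be GE. The main obstacle is that $u = Ge_1$ can have components of mixed sign so that various coordinates of $y_n$ diverge to $\pm\infty$; this is sidestepped by discarding $-\mathbf{1}^\top e^{y_n}\le 0$ and relying only on the Gaussian penalty, which delivers the $e^{2n}$ decay regardless of the sign pattern of $u$ as long as $\Sigma^{-1}$ is positive definite.
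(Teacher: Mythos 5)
Your proof is correct, and it takes a genuinely different (and more self-contained) route than the paper. The paper proves this proposition by verifying the hypotheses of its general necessary-condition result, Theorem~\ref{thm:gmalanclt}: it lower-bounds $\|G\nabla\log f(x)\|/\|x\|$ by $\|\exp\{x\}\|/\|x\|$ on the positive orthant and concludes that \eqref{eq:gmalanclt} holds, so the chain cannot be GE. You instead work directly along the single ray $x_n=ne_1$, show the proposal mean overshoots to $-(h/2)e^n Ge_1+O(n)$ where the Gaussian prior factor forces $f(y_n)/f(x_n)\lesssim \exp\{-(h^2\kappa/8)e^{2n}\}$, deduce $r(x_n)\to 1$ by dominated convergence, and then invoke the \cite{robe:twee:1996} criterion $\operatorname{ess\,sup} r=1\Rightarrow$ not GE (which the paper itself quotes after Proposition~\ref{prop:mhge}). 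The underlying mechanism is identical — the drift term $-G\exp\{x\}$ catapults proposals into a region of negligible density — but your packaging has two advantages: it needs only the behavior along one ray, thereby sidestepping the fact that \eqref{eq:gmalanclt} as literally stated is a $\liminf$ over all $\|x\|\to\infty$ and fails in the negative orthant (where $\nabla\log f$ grows only linearly), an issue the paper's application glosses over by restricting to $x^{(i)}>0$; and it does not require boundedness of $f$, since you control the ratio $f(y_n)/f(x_n)$ directly. The only cosmetic imprecision is that the cross term in the quadratic expansion is $O(ne^n)$ rather than $O(e^n)$, which is still $o(e^{2n})$ and does not affect the conclusion; your handling of the sign ambiguity in $u=Ge_1$ by discarding the nonpositive Poisson term and relying solely on the $\Sigma^{-1}$ penalty is exactly the right move.
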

A proof of Proposition~\ref{thm:pcmalapois} for the MALA chain can be found in \cite{chri:moll:waag:2001}.

\subsection{Numerical examples of SGLMMs}
\label{sec:numer}

Spatial generalized linear mixed models (SGLMMs),
introduced by \cite{digg:tawn:moye:1998}, are often used for analyzing
non-Gaussian spatial data that are observed in a continuous region
\cite[see e.g.][]{zhan:2002, roy:evan:zhu:2016, evan:roy:2019}. SGLMMs
are GLMMs where the random effects consist
of a spatial process. Conditional on the spatial process, the response
variables are assumed to follow a distribution which only depends on
the site-specific conditional means. As in the GLMMs, a link function relates the means
of the response variable to the underlying spatial process. 

Let $\{X(s), s \in \Sbb\}$ be a Gaussian random field with mean
function $\E(X(s))$ and the covariance function
$\Cov(X(s), X(s')) = \sigma^2 \rho_\theta(\|s-s'\|)$. Here, the
parameter $\sigma^2$ is called the partial sill, and some examples of
the parametric correlation functions $\rho_\theta$ are the
\textit{exponential}, the \textit{Mat\'{e}rn}, and the
\textit{spherical} families. The mean $\E(X(s))$ is generally a
function of some regression parameters $\beta$, and the known location
dependent covariates. Conditional on the realized value of the
Gaussian random field, $\{x(s), s \in \Sbb\}$, and for any
$s_1, \dots, s_m \in \Sbb$, the response variables
$\{Z(s_1), \dots, Z(s_m)\}$ are assumed to be independent with
$Z(s_i)|x(s_i) \stackrel{ind}{\sim} a(z_i; \mu_i)$, where
$\mu_i=\E(Z_i|x(s_i))$ is the conditional mean related to $x(s_i)$
through some link function. Denoting $z(s_i)$ ($x(s_i)$) simply by
$z_i (x^{(i)})$ we arrive at the SGLMM target density $f(x)$ as given
in \eqref{eq:post}.

In this section, we perform simulation studies to assess the
performance of the PCMALA, the PMALA (MMALA) and the PCULA with
different choices of the pre-conditioning matrix $G$ in the context of
conditional simulation in SGLMMs. We also compare the performance of
these algorithms with the random walk Metropolis (RWM) algorithms. The
domain for the simulations is fixed to $\Sbb=[0,1]^2$, and the
Gaussian random field $x$ is considered at an $21 \times 21$ square
grid covering $\Sbb$. A realization of the data $z$ consists of
observations from the binomial spatial model at $m=350$ randomly
chosen sites with number of trials $\ell_i=50$ for all
$i=1,\dots,350$. The mean of the random field is set to $1.7$ for the
left half of the domain and to $-1.7$ for the right half, while its
covariance is chosen from the exponential family
$\text{Cov}(x(s),x(s'))= \sigma^2 \exp\{-\| l-l'\|/\phi \}$, with
$\sigma^2 =1$ and range $\phi=0.5$. We also consider simulated data
from the Poisson-log SGLMM and other setup as above.

For the PCMALA and the PCULA we consider four choices of $G$: i)
$G=I$, which corresponds to simply the MALA and the ULA, respectively,
ii) $G=\Sigma$, the covariance matrix of the (prior) distribution of
$x$, iii) $G= \text{diag} (\hat{\mathscr{I}}^{-1})$, the diagonal
matrix with diagonal elements from $\hat{\mathscr{I}}^{-1}$ where
$\hat{\mathscr{I}}=-\nabla^2\log f|_{x=\hat{x}},$ with
$\hat{x}=\text{argmax}_{x\in R^m}\log f(x)$, and finally iv)
$G= \hat{\mathscr{I}}^{-1}$. Note that for the RWM, the candidate
proposed position is $y=x+\sqrt{h}G^{1/2}\epsilon$ with
$\epsilon \sim N(0, I_m)$. For the RWM algorithms, we also consider
the before mentioned four choices of the $G$ matrix. The step-size $h$
is selected using pilot runs of the chains, ensuring the acceptance
rate for the different algorithms falls in (60\%, 70\%). Thus,
together we consider nine MH algorithms--four RWM chains denoted as
RWM1, RWM2, RWM3 and RWM4 corresponding to the four choices of the $G$
matrix in the before mentioned order, four PCMALA chains PCMALA1,
PCMALA2, PCMALA3 and PCMALA4 with the above $G$ matrices, respectively
and the PMALA chain.  Similarly, we consider four PCULA chains denoted
by PCULA1, PCULA2, PCULA3 and PCULA4 corresponding to the four $G$
matrices in the before mentioned order.

The empirical performance of the different MCMC algorithms is compared
using several measures (See \cite{roy:2020} for a simple introduction
to some of these convergence diagnostic measures.). In particular, the
MCMC samplers are compared using lag $k$ autocorrelation function
(ACF) values, the effective sample size (ESS) and the multivariate ESS (mESS),
ESS (mESS) per unit time, the mean squared jump distance (MSJD),
and the multivariate potential scale reduction factor (MPSRF). 
As mentioned in \cite{roy:2020}, for fast-mixing Markov chains, lag
$k$ ACF values drop down to (practically) zero quickly as $k$
increases, whereas high lag $k$ ACF values for larger $k$ indicate
slow mixing of the Markov chain. In one dimensional setting, ESS is
defined as $ \text{ESS}=n\hat{\lambda}^2_g/\hat{\sigma}^2_g,$ where
$n$ is the length of the chain, $\hat{\sigma}^2_g$ is the estimated
variance in the CLT as mentioned in Section~\ref{sec:mcbg} and
$\hat{\lambda}^2_g$ is the sample variance. When $g$ is a
$\mathbb{R}^p$ valued function for some $p>1$,
\cite{vats:fleg:jone:2019} define mESS as
$\text{mESS}=n(|\widehat{\Lambda}_g|/|\widehat{\Sigma}_g|)^{1/p},$
where $\widehat{\Lambda}_g$ is the sample covariance matrix and
$\widehat{\Sigma}_g$ is the estimated covariance matrix from the
CLT. From the definition of the ESS and the mESS, we see that larger
values of these measures imply higher efficiency of the Markov
chain. The ESS and mESS are calculated using the R package {\it
  mcmcse}. The MSJD based on $n$ iterations of a Markov chain
$\{X_n\}$ is defined as
$\text{MSJD}:= \sum_{i=1}^{n-1}\|X_{i+1}-X_{i}\|^2/(n-1)$.  MSJD
compares how much the chains move around the space, and larger values
indicate higher amount of mixing. As mentioned in
\cite{broo:gelm:1998}, starting at overdispersed initial points, if
the MPSRF $\hat{R}_p$ is sufficiently close to one, then the
simulation can be stopped. Thus, Markov chains for which $\hat{R}_p$
reaches close to one faster are preferred. We use the R package {\it
  coda} for computing $\hat{R}_p$. As mentioned in \cite{roy:2020},
for using most of the above mentioned numerical measures including
ESS, mESS, and MPSRF, existence of a Markov chain CLT is assumed
emphasizing the importance of establishing the geometric ergodicity
properties of the paper. While for the binomial-logit link SGLMM we have
established a CLT for the PCMALA and MMALA chains, for the Poisson model we are
naively going to use the before mentioned numerical measures to compare
the different algorithms.

\subsection{Comparison of the adjusted Langevin algorithms}
\label{subsec:comp_pmala}
We ran each of the nine MH chains started at $x_{\text{true}}$, the
`true' value of $x$ used to simulate the data $z$, for 150,000
iterations. For the binomial SGLMM, Table~\ref{table:ess.bl} provides
the ESS values for the three marginal chains corresponding to
$(x^{(1)}, x^{(175)}, x^{(350)})$, the first, the 175th and the 350th
element of the $350$ dimensional $x$ vector at the randomly chosen
sites mentioned before. The locations for these three points in the
$21 \times 21$ square grid covering $\Sbb$ are (0,0), (0.1, 0.5), and
(1, 1), respectively. Table~\ref{table:ess.bl} also includes the mESS
values for the multivariate $350$ dimensional Markov chains. From
Table~\ref{table:ess.bl}, we see that the choice of the covariance
matrix $G$ does not change the performance of the RWM algorithms much,
whereas efficiency of the PCMALA can vary greatly with $G$. Indeed,
when $G= \hat{\mathscr{I}}^{-1}$, there are huge gains in efficiency
for the PCMALA resulting in much higher ESS, mESS values compared to
the other choices of $G$. We see that even with the ideal choice of
$G$, that is, $G= \mathscr{I}^{-1}$, the PMALA has much smaller ESS
and mESS values than the PCMALA with $G= \hat{\mathscr{I}}^{-1}$ but
it is better than the PCMALA with non-optimal choices of $G$, such as
$G=I$, $G=\Sigma$ or $G=\text{diag}(\hat{\mathscr{I}}^{-1})$. On the
other hand, for the PMALA, unlike the PCMALA, the covariance matrix
$G$ needs to be recomputed in every iteration, leading to higher
computational burden. This is why, the improvement of the PMALA over
the PCMALA with $G=I$, $G=\Sigma$ or
$G=\text{diag}(\hat{\mathscr{I}}^{-1})$ in terms of time-normalized
efficiency (ESS per minute) reduces. The PCMALA with $G=\hat{\mathscr{I}}^{-1}$ results in much higher
values of ESS, mESS and ESS/min than the
other algorithms considered here. Indeed, the PCMALA with
$G=\hat{\mathscr{I}}^{-1}$ results in {\it more than 20 times}
equivalent independent samples than the PMALA for the same amount of
running time.

Table~\ref{table:msejd.bl} provides the MSJD values for the nine
chains. Again, for the RWM, the MSJD values remain similar regardless
of the choice of $G$. The PMALA has higher MSJD values than the PCMALA
with $G=I$, $G=\Sigma$ or $G=\text{diag}(\hat{\mathscr{I}}^{-1})$
implying better mixing, whereas with $G=\hat{\mathscr{I}}^{-1}$ PCMALA
dominates the PMALA and the RWM
algorithms. Figure~\ref{fig:acf.binomlogit} shows the ACF plots for
the first 50 lags for the nine MH algorithms for each of the three
marginal chains. The ACF plots corroborate faster mixing for the
PCMALA chains with $G=\hat{\mathscr{I}}^{-1}$ than all
other eight chains and MMALA than the other Markov chains except PCMALA with $G=\hat{\mathscr{I}}^{-1}$. Indeed,
only for the PCMALA chain with $G=\hat{\mathscr{I}}^{-1}$, the lag $k$
autocorrelation becomes negligible by $k=50$.

Next, for each of the nine MH algorithms, we compute the MPSRF
$\hat{R}_p$ from five parallel chains started from $x_{\text{true}}$,
$- x_{\text{true}}$, $0$ (a vector of zeros) and
$x_{\text{true}} \pm 1$, respectively. The $\hat{R}_p$ plots are given
in Figure~\ref{fig:gelmandiag.binomlogit}. From these plots we see
that for the PCMALA chain with $G=\hat{\mathscr{I}}^{-1}$, $\hat{R}_p$
reaches below 1.1 (a cutoff widely used by MCMC practitioners)
before 5,000 iterations, whereas for several other algorithms,
including the PMALA, $\hat{R}_p$ is still larger than 1.1 even after
80,000 iterations. Thus, as for the other diagnostics, $\hat{R}_p$
also indicates superior performance of the PCMALA chain with
$G=\hat{\mathscr{I}}^{-1}$ than the other MH algorithms considered
here and MMALA is the second best. 
The performance of the nine MCMC algorithms for the Poisson-log
link SGLMM, as observed from the tables and figures given in~\ref{sec:apppois}, is
similar to the binomial-logit link SGLMM discussed here.

We considered other values of $m$ as well. For smaller $m$ (less than
50), we observe that the same or similar step-size $h$ can be used
for both PCMALA with $G=\hat{\mathscr{I}}^{-1}$ and MMALA to achieve similar acceptance rates and in
these lower dimensions, PCMALA with $G=\hat{\mathscr{I}}^{-1}$ has slightly better or similar
performance as the MMALA. On the other hand, in the higher dimensions
as we present here, MMALA needs much smaller $h$ to attain a similar
acceptance rate as the PCMALA with $G=\hat{\mathscr{I}}^{-1}$. The small step-size, in turn, leads to
more correlated samples and smaller ESS values for the MMALA in the
higher dimensions.

\begin{table}[H]
\caption{ESS values for the MH chains for the binomial SGLMM with the logit link}
\centering
\begin{tabular}{c c c c  c} 
 \hline\hline
Algorithm & $G$ matrix &ESS(1, 175, 350) & ESS/min& mESS \\  [0.5ex]
 \hline\hline

  \multirow{4}{4em}{RWM}& $I$ & ( 44,35,48 ) &( 0.20,0.16,0.22 )  & 1,064 \\ 
& $\Sigma$ & ( 40,21,14 ) &( 0.17,0.09,0.06 ) & 1,074 \\
& diag $\hat{\mathcal{I}}^{-1}$ & ( 28,30,28 ) & ( 0.12,0.13,0.12 ) &  1,070 \\
& $\hat{\mathcal{I}}^{-1}$ & ( 37,42,68 ) & ( 0.15,0.18,0.28 ) &  1,055\\
\hline
\multirow{4}{4em}{PCMALA}& I & ( 8,10,6 ) & ( 0.04,0.05,0.03 ) &  1,051\\
& $\Sigma$&( 9,7,6 ) &( 0.05,0.04,0.03 ) &  1,039\\
& diag $\hat{\mathcal{I}}^{-1}$ &( 204,245,198 )&( 1.08,1.29,1.05 )& 1,274\\
& $\hat{\mathcal{I}}^{-1}$ &( 9,249,8,282,9,066 ) &( 48.95,43.83,47.98 ) &  12,422\\
PMALA&  & ( 664,792,834 ) &( 2.12,2.52,2.66 )  &  2,623\\
\hline\hline
\end{tabular}
\label{table:ess.bl}
\end{table}

\begin{table}[H]
\caption{MSJD values for the MH chains for the binomial SGLMM with the logit link}
\centering
\begin{tabular}{c c c c c c c c c c} 
 \hline\hline
RWM1 &RWM2 &RWM3 & RWM4&PCMALA1&PCMALA2&PCMALA3&PCMALA4&PMALA\\  
 \hline\hline
  0.024 &0.027 &0.017 &0.018&2.19e-06&2.28e-09 &0.15&5.16& 0.496\\
\hline\hline
\end{tabular}
\label{table:msejd.bl}
\end{table}

\begin{figure*}[h]
  \includegraphics[width=\linewidth]{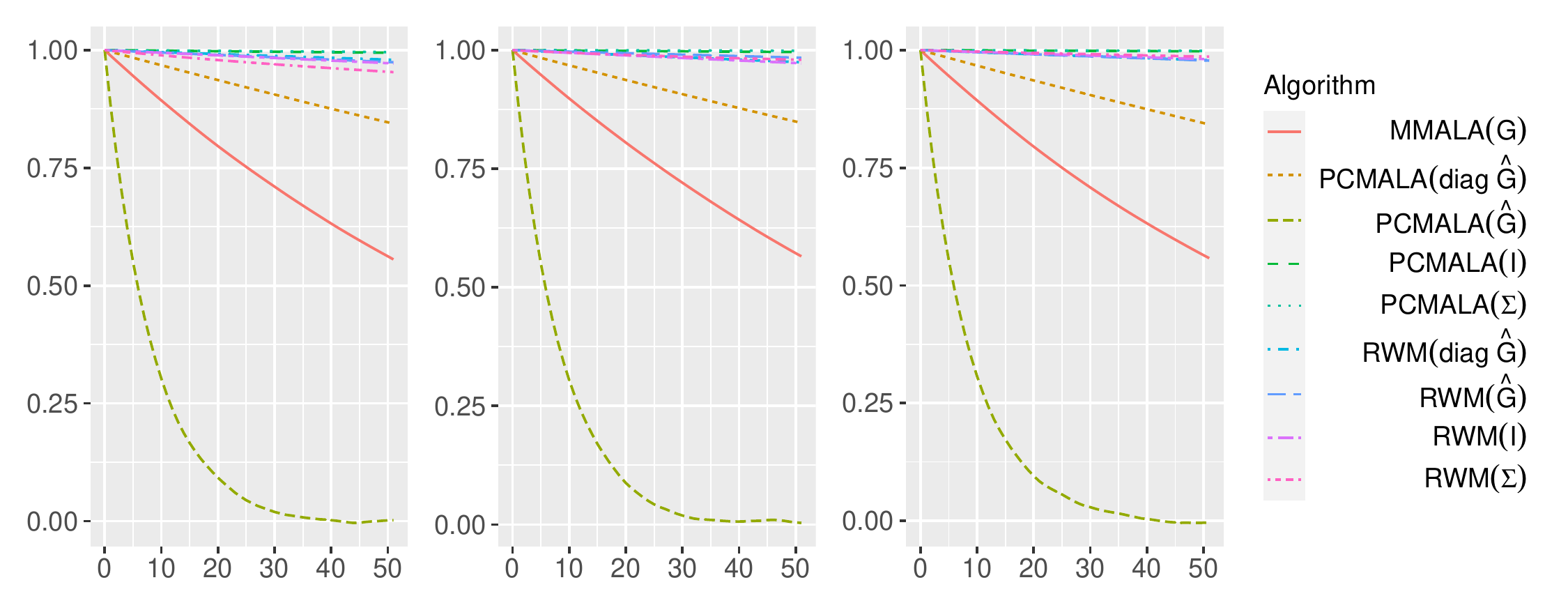}
  \caption{ACF plots for $x^{(1)}$ (left panel), $x^{(175)}$ (center
    panel), and $x^{(350)}$ (right panel) for the MH chains for the binomial SGLMM with the logit
    link. In the legend, $G$ refers to $\mathscr{I}^{-1}$ and $\hat{G}$ refers to $\hat{\mathscr{I}}^{-1}$.}
\label{fig:acf.binomlogit}
\end{figure*}

\begin{figure*}[h]
   \begin{minipage}[b]{0.24\linewidth}
    \includegraphics[width=\linewidth]{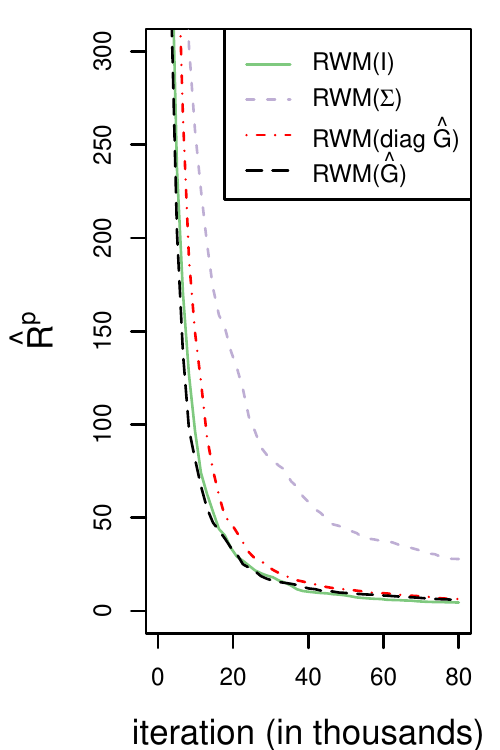}
  \end{minipage}
  \begin{minipage}[b]{0.24\linewidth}
    \includegraphics[width=\linewidth]{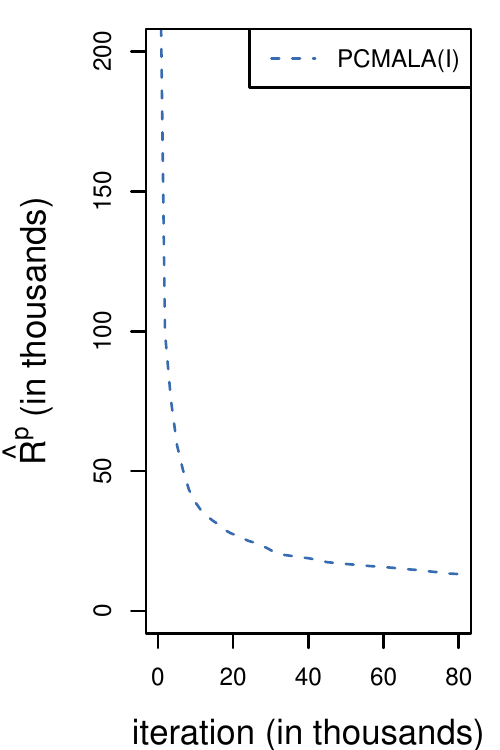}
  \end{minipage}
  \begin{minipage}[b]{0.24\linewidth}
    \includegraphics[width=\linewidth]{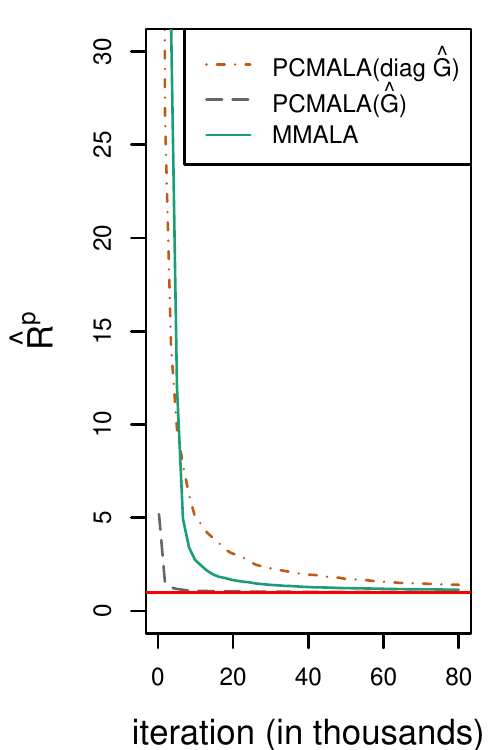}
  \end{minipage}
  \begin{minipage}[t]{0.24\linewidth}
    \includegraphics[width=\linewidth]{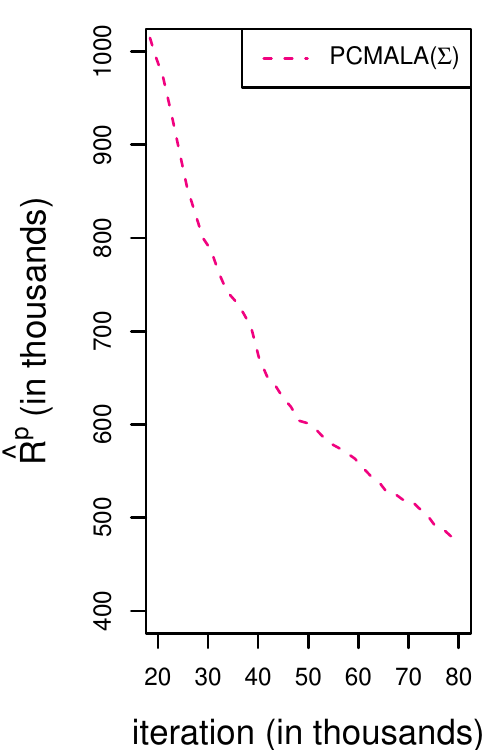}
  \end{minipage}
 \caption{Gelman and Rubin's $\hat{R}_p$ plot from the five parallel MH chains for the binomial SGLMM with the logit
    link. The red horizontal line on the third plot from the left has unit height. In the legend, $G$ refers to $\mathscr{I}^{-1}$ and $\hat{G}$ refers to $\hat{\mathscr{I}}^{-1}$.}
\label{fig:gelmandiag.binomlogit}
\end{figure*}

\subsection{Comparison of the pre-conditioned unadjusted Langevin algorithms}
\label{subsec:comp_pcula}
In this section, we compare the four PCULA chains mentioned before in
the context of simulated data from the binomial and Poisson
SGLMMs. Since the unique stationary density of each of these PCULA is
different, we do not use ESS for comparing these chains. As in
Section~\ref{subsec:comp_pmala}, we ran each of the PCULA chains for
150,000 iterations starting at
$x_{\text{true}}$. Table~\ref{table:msejd.PCULA.bl} provides the MSJD
values for the PCULA chains for the binomial and the Poisson
SGLMMs. As for the PCMALA, we see that when the pre-conditioning
matrix $G$ is $\hat{\mathscr{I}}^{-1}$, the PCULA chain results in
higher mixing than the other PCULA
chains. Figures~\ref{fig:acf.PCULA.binomlogit} and
\ref{fig:acf.PCULA.poissonlog} provide the ACF values for the first 50
lags. For the binomial model, we see that except when $G=\Sigma$, for
the other PCULA chains, the ACF values drop down quickly. Also, for
the binomial SGLMM, for smaller lags, PCULA4 has slightly higher ACF
values than PCULA1 ($G=I$). Recall that, if $G=I$, the PCULA boils
down to the ULA. For the Poisson SGLMM, for PCULA4, the ACF values
(practically) drop down to zero before five lags, whereas, the ACF
values for the other three PCULA are quite large even after 50
lags. Thus, as for the adjusted Langevin algorithms, the
pre-conditioning matrix $\hat{\mathscr{I}}^{-1}$ results in better
PCULA than the other choices of $G$ considered here.
 Finally, Figure~\ref{fig:gelmandiag.pcula} 
 provides the $\hat{R}_p$ plots based on
the five Markov chains started at the same five points $x_{\text{true}}$,
$- x_{\text{true}}$, $0$ and $x_{\text{true}} \pm 1$ as in
Section~\ref{subsec:comp_pmala}, for each of the four PCULA
chains. For both binomial and Poisson SGLMMs, the $\hat{R}_p$
reaches below 1.1 before 5,000 iterations of the PCULA4 chain. The
PCULA3 algorithm ($G= \text{diag} (\hat{\mathscr{I}}^{-1})$) is the
second best performer in terms of $\hat{R}_p$.

\section{Discussions}
\label{sec:disc}
In this paper, we establish conditions for geometric convergence of
general MH algorithms with normal proposal density involving a
position-dependent covariance matrix. If the mean of the proposal
distribution is of the form $x + h e(x)$, where $x$ denotes the
current state, the users implementing these MCMC algorithms should
make sure that $\|e(x)\|$ does not grow too fast with
$\|x\|$. Similarly, if $\|e(x)\|$ shrinks, then the tails of $f(x)$
need to die down rapidly. As special cases, our results apply to the
MMALA and other modern variants of the MALA. For the MMALA and other
MALA chains, first and higher-order derivatives of the log target
density are required. Here, in our GLMM examples, the derivatives are
available in closed form. \cite{giro:cald:2011} discuss several
alternatives of the expected Fisher information matrix when it is not
analytically available \cite[see also Section 4.4
of][]{livi:giro:2014}. In the numerical examples involving binomial
and Poisson SGLMMs, we observe that the PCMALA with an appropriate
pre-conditioning matrix performs favorably than the advanced
MMALA. Thus, in practice, it is worthwhile to construct suitable
PCMALA chains that may have superior performance than the modern
computationally expensive versions of MALA like the MMALA chain. On
the other hand, MMALA may dominate the PCMALA with the
pre-conditioning matrices used here for heavy-tailed distributions or
targets with a fast changing Hessian, for example, the perturbed
Gaussian density of \cite{chew:lu:ahn:2021} or the Example 4 of
\cite{gorh:dunc:2019} \cite[see also][]{tayl:2015,latu:robe:2011}.

Here, we have not considered a quantitative bound for the total
variation norm \eqref{eq:ge}, although with some modification of our
results such bounds can be obtained. For example, \cite{rose:1995} use
the method of coupling along with the drift and minorization technique
to construct such quantitative bounds. On the other hand, these bounds
are often too conservative to be used in practice
\citep{qin:hobe:2021}. Recently, \cite{durm:moul:2015} and
\cite{durm:moul:2019} build some quantitative bounds for certain MALA
and ULA chains. We believe that our results are a useful pre-cursor to
constructing sharper quantitative bounds for position dependent MALA
chains.

As mentioned before, \cite{livi:beta:byrn:2019} establish geometric
ergodicity of the HMC when the `mass matrix' in the `kinetic energy'
is fixed \cite[see also][]{mang:smit:2021}. On the other
hand, \cite{giro:cald:2011} argue that a position-dependent mass
matrix in the HMC may be preferred, and they develop the Riemann
manifold HMC (RMHMC). The techniques of this paper can be extended to
establish convergence results of the RMHMC algorithms and we plan to
undertake this as a future study. Finally, Langevin methods have been
applied to several Bayesian models \cite[see
e.g.][]{moll:syve:waag:1998, giro:cald:2011, neal:2012}. It would be
interesting to compare the performance of the PMALA and the PCMALA in
the context of these examples.

\begingroup
\small
\setstretch{0.9}
\bibliographystyle{asadoi}
\bibliography{ref}
\endgroup

\pagebreak
\begin{center}
\textbf{\large Supplement to \\
``Convergence of position-dependent MALA with application to conditional simulation in GLMMs" \\Vivekananda Roy and Lijin Zhang}
\end{center}
\setcounter{equation}{0}
\setcounter{figure}{0}
\setcounter{table}{0}
\setcounter{page}{1}
\setcounter{section}{0}
\makeatletter
\renewcommand{\thesection}{S\arabic{section}}
\renewcommand{\thesubsection}{\thesection.\arabic{subsection}}
\renewcommand{\theequation}{S\arabic{equation}}
\renewcommand{\thefigure}{S\arabic{figure}}
\renewcommand{\bibnumfmt}[1]{[S#1]}
\renewcommand{\citenumfont}[1]{S#1}
\renewcommand{\thetable}{S\arabic{table}}

  \section{Proofs of results}
\label{sec:approof}
\begin{proof}[Proof of Theorem~\ref{thm:gmalage}]
  From the form of \eqref{eq:propgmala}, and by A2, we know that $P$
  is $\phi$-irreducible and aperiodic. Let $C$ be a nonempty compact
  set. Since $f$ is bounded away from $0$ and $\infty$ on compact
  sets and A1 and A2 are in force, we have
  $\varepsilon =\inf_{x, y \in C} q(x, y) >0$ and
  $u = \sup_{x \in C} f(x) < \infty$. Let $B \subseteq C$. Then for
  any $x \in C$
  \begin{align*}
    P(x, B) &\ge \int_{A(x) \cap B} q(x, y) \alpha(x, y) dy + \int_{R(x) \cap B} q(x, y) \alpha(x, y) dy\\
            &= \int_{A(x) \cap B} q(x, y) dy + \int_{R(x) \cap B} \frac{f(y)}{f(x)} q(y, x)  dy\\
    &\ge \varepsilon \int_{A(x) \cap B} \frac{f(y)}{u} dy + \frac{\varepsilon}{u} \int_{R(x) \cap B} f(y) dy = \frac{\varepsilon}{u} F(B).
  \end{align*}
Thus, $C$ is small. Let
  $V_s(x)=\exp\{s\|G_2^{-1/2} x\|\}$, with $s>0$. We will show that
  with this drift function, Proposition~\ref{prop:mhge} holds, implying
  geometric ergodicity of the MH chain.  From \eqref{eq:mhtr} and
  \eqref{eq:rx}, we have
\[
    PV_s(x) = \int_{R^d}q(x,y)\alpha(x,y)V_s(y) dy+V_s(x)\int_{R(x)}q(x,y)(1-\alpha(x,y))dy,
\]
implying
\begin{equation}
  \label{eq:drmh}
  \frac{PV_s(x)}{V_s(x)} \leq \int_{R^d}q(x,y)\frac{V_s(y)}{V_s(x)} dy + \int_{R(x)}q(x,y)(1-\alpha(x,y))dy.
\end{equation}
Since $V_s(x)=\exp\{s\|G_2^{-1/2} x\|\}$, by A1, the first term in the right side of \eqref{eq:drmh} is as large as
\begin{equation}
   \label{eq:drmh1}
   (2\pi h)^{-d/2}|G_1|^{-1/2} \int_{R^d} \exp\Big\{-\frac{1}{2h}(y-c(x))^{\top}G_2^{-1}(y-c(x))+s(\|G_2^{-1/2} y\|-\|G_2^{-1/2} x\|)\Big\}dy. 
  \end{equation}
Since
$\|G_2^{-1/2}y\| \le \|G_2^{-1/2}(y-c(x))\| + \|G_2^{-1/2} c(x)\|$,
letting $z= G_2^{-1/2}(y-c(x))$, from \eqref{eq:drmh1}, it follows that the first term in the right side of \eqref{eq:drmh} is as large as
\begin{equation}
\label{eq:drmh2}
 \frac{\exp\{-s(\|G_2^{-1/2} x\|-\|G_2^{-1/2}c(x)\| -sh/2)\}}{(2\pi h)^{d/2}|G_1|^{1/2}|G_2|^{-1/2}}\int_{R^d}\exp\big\{-[\|z\|^2-2sh\|z\| + s^2h^2]/2h\big\}dz. 
\end{equation}
Now, we consider the polar transformation
$(z_1, \dots, z_d) \rightarrow (r, \theta_1, \theta_2, \dots,
\theta_{d-1})$ such that
$z_1=r\cos \theta_1, z_{2} =r\sin\theta_1\cos\theta_2,\dots,
z_{d-1}=r\sin\theta_1 \dots \sin\theta_{d-2}\cos\theta_{d-1},
z_{d}=r\sin\theta_1 \dots \sin\theta_{d-2}\sin\theta_{d-1}$.  Here,
$r >0, 0<\theta_{d-1}<2\pi$, $0<\theta_i<\pi, i=1,\dots,d-2$, and
the Jacobian is $r^{d-1} \prod_{i=1}^{d-2} \sin^{d-1-i}\theta_i$. Thus,
\begin{align}
  \label{eq:drmh3}
   \int_{R^d}\exp\Bigg\{-\frac{\|z\|^2-2sh\|z\| + s^2h^2}{2h}\Bigg\}dz &\le \int_{0}^{2\pi}\int_{0}^{\pi}\dots\int_{0}^{\pi}\int_{0}^{\infty}\exp\Bigg\{\frac{-(r-hs)^2}{2h}\Bigg\}r^{d-1}drd\theta_1\dots d\theta_{d-2}d\theta_{d-1}\nonumber\\
 &= 2\pi^{d-1}\int_{0}^{\infty}\exp\{-(r-hs)^2/(2h)\}r^{d-1}dr.
\end{align}
Using \eqref{eq:drmh2} and \eqref{eq:drmh3}, from \eqref{eq:drmh} we have
\begin{equation}
  \label{eq:pv}
  \limsup_{\|x\|\rightarrow\infty} \frac{PV_s(x)}{V_s(x)} \leq C_2(s) \exp(-s\eta)+ C_1.
\end{equation}
Thus under A4, \eqref{eq:limsupp} holds. Also, from \eqref{eq:drmh}--\eqref{eq:drmh3} by A2 we have
\[
  \sup_{\|x\| \le k} \frac{PV_s(x)}{V_s(x)} \le 1 + C_2(s) \sup_{\|x\| \le k} \exp\{s(\|G_2^{-1/2}c(x)\| -\|G_2^{-1/2} x\|)\} < \infty.
\]
Hence, the proof follows from Proposition~\ref{prop:mhge}.
\end{proof}

\begin{proof}[Proof of Theorem~\ref{thm:gmalage2}]
As in the proof of Theorem~\ref{thm:gmalage}, we know that $P$
  is $\phi$-irreducible, aperiodic, and nonempty compact
  sets are small.  Let $V(x)=x^{\top}x$. Then,
  \begin{align}
    \label{eq:normquad}
    \int_{R^d}\frac{V(y)}{V(x)}q(x,y) dy &\le \frac{1}{x^{\top}x} (2\pi h)^{-d/2}|G_1|^{-1/2} \int_{R^d} \big[y^{\top}y\big] \exp\Big\{-\frac{1}{2h}(y-c(x))^{\top}G_2^{-1}(y-c(x))\Big\}dy\nonumber\\ &= \frac{c(x)^{\top}c(x) + tr(hG_2)}{x^{\top}x} (|G_2|/|G_1|)^{1/2}. 
  \end{align}
  Thus, from \eqref{eq:drmh} and \eqref{eq:normquad} we have
\begin{equation}
  \label{eq:pv2}
  \limsup_{\|x\|\rightarrow\infty} \frac{PV(x)}{V(x)} \leq \Big(\frac{|G_2|}{|G_1|}\Big)^{1/2}\limsup_{\|x\|\rightarrow\infty}\frac{\|c(x)\|^2}{\|x\|^2} + C_1.
\end{equation}
Also, note that by A2 we have
\[
  \sup_{\|x\| \le k} PV(x) \le k^2 + (|G_2|/|G_1|)^{1/2} \sup_{\|x\| \le k} \Big[c(x)^{\top}c(x) + tr(hG_2)\Big]  < \infty.
\]
  Hence, the proof follows by \eqref{eq:pv2} and applying Proposition~\ref{prop:mhge} on the function $V(x) +1$ as A5 is in force.  
\end{proof}

\begin{proof}[Proof of Theorem~\ref{thm:necmgf}]
  Since $c(x) = x + h e(x)$, from \eqref{eq:propgmala} we have
  \begin{equation}
    \label{eq:necmgf}
    Q(x, B_k(x)) \ge (2\pi h)^{-d/2}|G_2|^{-1/2} \int_{\|z\| <k} \exp\Big\{-\frac{1}{2h}(z-he(x))^{\top}G_1^{-1}(z-he(x))\Big\}dz.
  \end{equation}
  Then, by $\|e(x)\| < M$ it follows that for given $\varepsilon >0$, there
  exists $k$ such that $Q(x, B_k(x))> 1-\varepsilon$. Thus, the result
  follows from Proposition~\ref{prop:mhnec}.
\end{proof}
\begin{proof}[Proof of Theorem~\ref{thm:gmalanclt}]
Choose $T>2/h$, such that when $S_1$ is large enough,
\begin{align*}
    \inf_{\|x\|>S_1}\frac{\|e(x)\|}{\|x\|}>T.
\end{align*}
Define $B^2_{k}(x)=\{y:\|y- c(x)\|\leq k \}$. By A1, for given $\varepsilon>0$, there exists $k_{\varepsilon}$, 
such that $\int_{R^d/B^2_{k_\varepsilon}(x)}q(x,y)dy<\varepsilon/2$. To
simplify notations, for the rest of this proof, we denote
$B^2_{k_\varepsilon}(x)$ by $B^2_{\varepsilon}(x)$. When
$y\in B^2_{\varepsilon}(x)$, $q(x,y)$ is bounded away from 0, as
\begin{equation}
\label{eq:qxylo}
      q(x,y)\geq  (2\pi h\zeta_{2}^+)^{-d/2}\exp\{-k_{\varepsilon}^2/[2h\zeta_{1+}]\}.
\end{equation}
Note that, the proposed $y$ is generated as
$y= x+h e(x)+\sqrt{h G(x)}\epsilon$, which is either accepted or
rejected with the chain staying at the current position $x$. Here
$\epsilon\sim N(0, I_d)$. Since,
\[
  \|y\|=\|x+h e (x)+\sqrt{h G(x)} \epsilon\|\geq h\|e(x)\|-\|x\|-\sqrt{h}\|\sqrt{G(x)}\epsilon\|,
\]
we have
\[
    \frac{\|y\|}{\|x\|}\geq h\frac{\|e(x)\|}{\|x\|}-1-\sqrt{h}\frac{\|\sqrt{G(x)}\epsilon\|}{\|x\|}.
\]
Hence, by \eqref{eq:gmalanclt}, $\exists \; S_2$ such that when $\|x\|> S_2$ and
$y\in B^2_{\varepsilon}(x)$, we have $\|y\|>\|x\|$. Let $S=$ max
$(S_1, S_2)$. Thus, when $y\in B_{\varepsilon}(x)$, and $\|x\|>S$,
\[
    \|he(y)\|>hT\|y\|>2\|y\|>\|x\|+\|y\|>\|x-y\|,
\]
and hence,
\[
    \|x-y-he(y)\|\geq\|he(y)\|-\|x-y\| \geq hT\|y\|-2\|y\|.
\]
So, for $y\in B_{\varepsilon}(x)$, and $\|x\|>S$, we have
\begin{align*}
q(y,x)&\leq (2\pi h \zeta_{1+})^{-d/2}\exp\big\{-\frac{1}{2h\zeta_{2}^+}\|x-y-he(y)\|^2\big\}\\
&\leq (2\pi h\zeta_{1+})^{-d/2}\exp\{-(hT-2)^2\|y\|^2/[2h\zeta_{2}^+]\},
\end{align*}
and, thus when $\|x\|\rightarrow\infty$,
\begin{equation}
  \label{eq:qyxup}
    \sup_{y\in B^2_{\varepsilon}(x)}q(y,x)\leq (2\pi h \zeta_{1+})^{-d/2}\exp\{-(hT-2)^2\|x\|^2/[2h\zeta_{2}^+]\}
    \rightarrow 0.  
\end{equation}
From \eqref{eq:qxylo} and \eqref{eq:qyxup} we have 
\begin{equation}
  \label{eq:qyxxy}
    \sup_{y\in B^2_{\varepsilon}(x)}\frac{q(y,x)}{q(x,y)} \le \frac{\sup_{y\in B^2_{\varepsilon}(x)}q(y,x)}{\inf_{y\in B^2_{\varepsilon}(x)}q(x,y)}
    \rightarrow 0 \;\text{as}\; \|x\|\rightarrow\infty.
\end{equation}
Starting with $\|x_0\|>S$ and $f(x_0)>0$, define
$x_n=\text{arg}\sup\{f(y);y\in B^2_{\varepsilon}(x_{n-1})\}$.  Note
that $\|x_n\|\rightarrow\infty$ when $n\rightarrow\infty$. 
Assume that the MH chain is GE, then, from Section~\ref{sec:mcbg} there exists $\varepsilon>0$, such that $\text{ess}\sup r(x)<1-\varepsilon$. Now,
\begin{align*}
    \text{ess}\sup r(x_n)&= \text{ess}\sup\big\{1-\int_{R^d}\alpha(x_n,y)q(x_n,y)dy\big\}\\
    &\geq 1-\text{ess}\sup\big\{\int_{R^d}\alpha(x_n,y)q(x_n,y)dy\big\}.
\end{align*}
Thus,
\begin{align*}
    1-\text{ess}\sup r(x_n)\leq \sup \int_{R^d}\alpha(x_n,y)q(x_n,y)dy,
\end{align*}
and with the fact that $\int_{R^d/B^2_{\varepsilon}(x_n)}q(x_n,y)dy\leq \varepsilon/2$, we have
\begin{align*}
    1-\text{ess}\sup r(x_n)&\leq\sup\int_{R^d/B^2_{\varepsilon}(x_n)}\Big\{1\wedge\frac{f(y)}{f(x_n)}\frac{q(y,x_n)}{q(x_n,y)} \Big\}q(x_n,y)dy\\&+\sup\int_{B^2_{\varepsilon}(x_n)}\Big\{1\wedge\frac{f(y)}{f(x_n)}\frac{q(y,x_n)}{q(x_n,y)} \Big\}q(x_n,y)dy\\
    &\leq \sup\int_{R^d/B^2_{\varepsilon}(x_n)}q(x_n,y)\mu(dy)+\sup\int_{B^2_{\varepsilon}(x_n)}\Big\{1\wedge\frac{f(y)}{f(x_n)}\frac{q(y,x_n)}{q(x_n,y)}\Big\} q(x_n,y)dy\\
    &\leq \frac{\varepsilon}{2}+\sup\int_{B^2_{\varepsilon}(x_n)}1\wedge\frac{f(y)}{f(x_n)}\frac{q(y,x_n)}{q(x_n,y)} q(x_n,y)dy.
\end{align*}
Thus,
\begin{align}
    \label{eq:piqyxxy}
   \varepsilon< &1-\text{ess}\sup r(x_n) \le \frac{\varepsilon}{2}+\sup\int_{B^2_{\varepsilon}(x_n)}\Big\{1\wedge\frac{f(y)}{f(x_n)}\frac{q(y,x_n)}{q(x_n,y)} \Big\}q(x_n,y)dy\nonumber\\
    &\Rightarrow \sup\int_{B^2_{\varepsilon}(x_n)}\Big\{1\wedge\frac{f(y)}{f(x_n)}\frac{q(y,x_n)}{q(x_n,y)}\Big\} q(x_n,y)\mu(dy)>\frac{\varepsilon}{2}.
\end{align}
From \eqref{eq:qyxxy}, when $n>N$ and $N$ is large enough, we have $\sup_{y\in B^2_{\varepsilon}(x_n)}[q(y,x_n)/q(x_n,y)]< \varepsilon/4$. Thus from \eqref{eq:piqyxxy} we have
\begin{align*}
    \frac{\varepsilon}{2} \le \sup\int_{B^2_{\varepsilon}(x_n)} \Big\{1\wedge \frac{\varepsilon}{4}\frac{f(x_{n+1})}{f(x_n)}\Big\}q(x_n,y)dy \leq \frac{\varepsilon}{4}\frac{f(x_{n+1})}{f(x_n)},
\end{align*}
implying $f(x_{n+1})> 2f(x_n)$, which contradicts that $f$ is bounded.
Therefore, the MH chain is not geometric ergodic. 
\end{proof}
\begin{proof}[Proof of Proposition~\ref{prop:pcula}]
    Since $c(x)$ is continuous, by Fatou's lemma for a fixed open set $A \in \mathbb{R}^d$,
    \[
      \liminf_{x_n \rightarrow x} Q(x_n, A) = \liminf_{x_n \rightarrow
        x} \frac{|G|^{-1/2}}{(2\pi h)^{d/2}} \int_{A}
      \exp\Big\{-\frac{1}{2h}(y-c(x_n))^{\top}G^{-1}(y-c(x_n))\Big\}dy
      \ge Q(x, A).\] Thus, $\{X_n\}_{n \ge 0}$ is a Feller chain. From
    the proof of Theorem~\ref{thm:gmalage}, we have
    \[
       \limsup_{\|x\|\rightarrow\infty} \frac{QV_s(x)}{V_s(x)} <1,
     \]
     and as $V_s$ is unbounded off compact sets, by
     Proposition~\ref{prop:gedrif}, $\{X_n\}_{n \ge 0}$ is
     GE when A4 holds. Similarly, the proof for A5 follows by
     Proposition~\ref{prop:gedrif}, and using the drift function $V$
     from Theorem~\ref{thm:gmalage2}.
   \end{proof}
   \begin{proof}[Proof of Theorem~\ref{thm:gmalabinom}]
  For the PCMALA chain A1 holds automatically. Recall that the
  proposal density for PCMALA is $N(x+hG\nabla\log f(x)/2, hG)$. Thus,
  from \eqref{eq:binom12der} it follows that the proposal density for
  the PCMALA is \eqref{eq:propgmala} with $G(x) = G$ and
  \begin{equation}
    \label{eq:spmean}
     c(x)=x+(h/2) G(-\Sigma^{-1}x+b(x)), 
  \end{equation}
where $b(x)=z-\ell\cdot(e^x/[1+e^x])+\Sigma^{-1}D\beta$
is bounded. Thus, A2 holds for the PCMALA chain.   We now show that A3 holds for the PCMALA chain.
For a given $0< \varepsilon < 1$, set $B_{\varepsilon}^3(x)=\{y: \|G^{-1/2}(y-c(x))\|<k_{\varepsilon}\}$, and $k_\varepsilon>0$, such that $\int_{R^{m}\setminus B_{\varepsilon}^3(x)}q(x,y)dy<\varepsilon$.

From \eqref{eq:post} the acceptance probability $\alpha (x,y)$ in \eqref{eq:accpro} becomes 
\begin{align*}
  1\wedge \exp\{-\log q(x,y)+\log q(y,x)+\log p(y) -\log p(x) + \sum_{i=1}^m\log a(z_i,\mu_i(y))-\sum_{i=1}^m\log a(z_i,\mu_i(x))\}. 
\end{align*}
We will show that the proposal $y \in B_{\varepsilon}^3(x)$ is always accepted when $\|x\|\rightarrow \infty$. 
From \eqref{eq:propgmala} and \eqref{eq:spmean} for the PCMALA we have
\begin{align*}
    &-\log q(x,y) -m \log(2h\pi)/2 - \log|G|/2\\
    =&\frac{1}{2h}\Big(y-x-\frac{h}{2}G(-\Sigma^{-1}x+b(x))\Big)^\top G^{-1}\Big(y-x-\frac{h}{2}G(-\Sigma^{-1}x+b(x))\Big)\\
    =&\frac{1}{2h}y^\top G^{-1}y+\frac{1}{2h}x^\top G^{-1}x+\frac{h}{8}x^\top\Sigma^{-1}G\Sigma^{-1}x+\frac{h}{8}b(x)^\top Gb(x)\\
    &-\frac{1}{h}y^\top G^{-1}x+\frac{1}{2}y^\top\Sigma^{-1}x-\frac{1}{2}y^\top b(x)-\frac{1}{2}x^\top\Sigma^{-1}x+\frac{1}{2}x^\top b(x)-\frac{h}{4}x^\top\Sigma^{-1}Gb(x).
\end{align*}
Let 
\begin{align}
  \label{eq:defjs}
 L_1&=\frac{h}{8}(x^\top\Sigma^{-1}G\Sigma^{-1}x-y^\top\Sigma^{-1}G\Sigma^{-1}y),\nonumber\\
 L_2&=\frac{h}{8}(b(x)^\top Gb(x)-b(y)^\top Gb(y)),\nonumber\\
 L_3&=\frac{1}{2}x^\top(I-h\Sigma^{-1}G/2)b(x)-\frac{1}{2}y^\top(I-h\Sigma^{-1}G/2)b(y) +\frac{1}{2}x^\top b(y)-\frac{1}{2}y^\top b(x),\;\text{and} \nonumber\\
 L_4&=\log p(y) -\log p(x) + \sum_{i=1}^m\log a(z_i,\mu_i(y))-\sum_{i=1}^m\log a(z_i,\mu_i(x))-\frac{1}{2}x^\top\Sigma^{-1}x+\frac{1}{2}y^\top\Sigma^{-1}y.
\end{align}
Note that $\alpha(x,y)=1\wedge \exp\{L_1+L_2+L_3+L_4\}$. If
$L_1+L_2+L_3+L_4\geq 0$, then the proposed $y$ would always be
accepted.  Again, from \eqref{eq:spmean}, for $y \in B_\varepsilon^3(x)$, we have
\begin{align}
  \label{eq:yquad}
    y^\top\Sigma^{-1}G\Sigma^{-1}y&=\big(\big(I-\frac{h}{2}G\Sigma^{-1}\big)x+O(1)\big)^\top\Sigma^{-1}G\Sigma^{-1}\big(\big(I-\frac{h}{2}G\Sigma^{-1}\big)x+O(1)\big) \nonumber\\
    &=x^\top\Sigma^{-1}G\Sigma^{-1}x+\frac{h^2}{4}x^\top\Sigma^{-1}G\Sigma^{-1}G\Sigma^{-1}G\Sigma^{-1}x+
    O(1)^\top\Sigma^{-1}G\Sigma^{-1}O(1) \nonumber \\ & -hx^\top\Sigma^{-1}G\Sigma^{-1}G\Sigma^{-1}x+
    2x^\top\Sigma^{-1}G\Sigma^{-1}O(1)-hx^\top\Sigma^{-1}G\Sigma^{-1}G\Sigma^{-1}O(1).
\end{align}
Let $\psi_{+} (\zeta_{+})$ and $\psi^{+} (\zeta^{+})$ be the smallest and the largest eigenvalue of $\Sigma^{-1} (G)$, respectively.
Note that, $x^\top\Sigma^{-1}G\Sigma^{-1}x/\|x\|^2 \in [\psi_+^2\zeta_+,\psi^{+2}\zeta^+]$. Similarly
$x^\top\Sigma^{-1}G\Sigma^{-1}G\Sigma^{-1}x/\|x\|^2 \in [\psi_+^3\zeta_+^2,\psi^{+3}\zeta^{+2}],$ and 
 $x^\top\Sigma^{-1}G\Sigma^{-1}G\Sigma^{-1}G\Sigma^{-1}x/\|x\|^2 \in [\psi_+^4\zeta_+^3,\psi^{+4}\zeta^{+3}]$. Thus, from \eqref{eq:defjs} and \eqref{eq:yquad} we have
\begin{align*}
    L_1=&\frac{h}{8}\big(-\frac{h^2}{4}x^\top\Sigma^{-1}G\Sigma^{-1}G\Sigma^{-1}x-O(1)^\top\Sigma^{-1}O(1)+ hx^\top\Sigma^{-1}G\Sigma^{-1}x-2x^\top\Sigma^{-1}O(1)\\&+hx^\top\Sigma^{-1}G\Sigma^{-1}O(1)\big)\\
    \geq&-\frac{h^3}{32}\psi^{+3}\zeta^{+2}\|x\|^2+\frac{h^2}{8}\psi_+^2\zeta_+\|x\|^2+o(\|x\|^2).
\end{align*}
So, if $h\in (0, 4\psi_+^2\zeta_+/[\psi^{+3}\zeta^{+2}]),$ when
$\|x\|\rightarrow\infty$, $L_1\rightarrow\infty$. Also, note that for
such $h,$ $L_1 \sim \|x\|^2$, $L_2$ is bounded and $L_3=o(L_1)$. Here,
$i(x) \sim j(x)$ means $i(x)/j(x) \rightarrow c$ for some constant
$c>0$. Since $\lim_{t \rightarrow - \infty}\log (1+\exp{[t]})/t^2 = 0$
and by L'Hospital's rule,
$\lim_{t \rightarrow\infty}\log (1+\exp{[t]})/t^2 = 0$, we have
\begin{align*} 
   L_4=&\log p(y) -\log p(x) + \sum_{i=1}^m\log a(z_i,\mu_i(y))-\sum_{i=1}^m\log a(z_i,\mu_i(x))-\frac{1}{2}x^\top\Sigma^{-1}x+\frac{1}{2}y^\top\Sigma^{-1}y\\
   =&-\frac{1}{2}(y-D\beta)^\top\Sigma^{-1}(y-D\beta)+\sum_{i=1}^m \{z_i y^{(i)}-\ell_i\log(1+\exp(y^{(i)}))\}+\\
   &\frac{1}{2}(x-D\beta)^\top\Sigma^{-1}(x-D\beta) - \sum_{i=1}^m \{z_i x^{(i)}-\ell_i\log(1+\exp(x^{(i)}))\}-\frac{1}{2}x^\top\Sigma^{-1}x+\frac{1}{2}y^\top\Sigma^{-1}y\\
   =&o(\|x\|^2).
\end{align*}
Therefore, for large $\|x\|$, $B_{\varepsilon}^3(x)\subseteq A(x)$. Recall that,
$A(x)= \{y:f(x)q(x,y)\leq f(y)q(y,x)\} = R(x)^c$. Thus,
\begin{align*}
    \liminf_{\|x\|\rightarrow\infty}\int_{A(x)}q(x,y)dy&\geq\liminf_{\|x\|\rightarrow\infty}\int_{B_{\varepsilon}^3(x)}q(x,y)dy >1-\varepsilon>0.
\end{align*}
Hence,
\begin{align*}
  C_1=\limsup_{\|x\|\rightarrow\infty}\int_{R(x)}q(x,y)(1-\alpha(x,y))dy &\leq\limsup_{\|x\|\rightarrow\infty}\int_{R(x)}q(x,y)dy \\&\leq 1-\liminf_{\|x\|\rightarrow\infty}\int_{A(x)}q(x,y)dy<1.
 \end{align*}
 Thus, A3 holds for the PCMALA chain. Next, we verify A4.  From
 \eqref{eq:spmean}, note that
 \begin{align}
   \label{eq:ineqspm}
   \|x\|-\|c(x)\|
   &=\|x\|-\|(I-(h/2)G\Sigma^{-1})x+ (h/2)G b(x)\| \nonumber\\
   &\geq \|x\|-\|(I-(h/2)G\Sigma^{-1})x\|-\|(h/2)G b(x)\|.
\end{align}
Now, if
$1-h\psi_+ \zeta_+ + h^2\psi^{+2} \zeta^{+2}/4<1 \Leftrightarrow h<4
\psi_+ \zeta_+/\psi^{+2}\zeta^{+2}$, then
$\|(I-(h/2)G\Sigma^{-1})x\|^2/x^\top x <1$. Since $b(x)$ is bounded,
and $\psi_+/\psi^{+} <1$, if
$h\in (0,4\psi_{+}^2\zeta_{+}/(\psi^{+3}\zeta^{+2}))$, from
\eqref{eq:ineqspm}, we have
$\liminf_{\|x\|\rightarrow\infty}\big(\|x\|-\|c(x)\|\big)=\infty$. Thus, by
Remark~\ref{rem:gmalage}, it follows that A4 holds for the PCMALA
chain. Thus for $h\in (0,4\psi_{+}^2\zeta_{+}/(\psi^{+3}\zeta^{+2}))$ geometric ergodicity of the PCMALA chain follows from Theorem~\ref{thm:gmalage}.

Next, we verify A1--A4 for the MMALA chain. 
Since
\begin{equation}
  \label{eq:ibd}
 G_1\equiv (0.25\text{diag} (\ell)+
  \Sigma^{-1})^{-1} \le \mathscr{I}^{-1}(x) \le \Sigma \equiv G_2,
\end{equation}
A1 holds
for the MMALA. The mean of the proposal distribution for the MMALA is
$c(x)=x+(h/2)\mathscr{I}^{-1}(x)\nabla\log f(x)+(h/2)\Gamma(x)$ where
$\Gamma(x)$ is given in \eqref{eq:gambin}. Thus,
  \begin{align}
    \label{eq:mmalamean}
    c(x)=x+(h/2)\mathscr{I}^{-1}(x)(-\Sigma^{-1}x + \kappa(x)),
\end{align}
where $\kappa(x)=b(x) + \mathscr{I}(x)\Gamma(x)$. 
From \eqref{eq:binom3der} it follows that
$(\partial \mathscr{I}_{jj} (x)/\partial x_{j})$ is bounded.  By
\eqref{eq:ibd} we have $\mathscr{I}^{-1}_{jj} (x)$ is bounded for all
$j$, and then, an application of the Cauchy-Schwartz inequality shows that
$\mathscr{I}^{-1}_{ij} (x)$ is bounded. Thus, from \eqref{eq:gambin} it
follows that $\kappa(x)$ is bounded, and A2 holds
for the MMALA chain.
  

Note that,
\begin{align*}
    &-\log q(x,y) -m \log(2h\pi)/2 + \log|\mathscr{I}(x)|/2\\
    =&\frac{1}{2h}\big(y-x-\frac{h}{2}\big\{\mathscr{I}^{-1}(x)(-\Sigma^{-1}x+\kappa(x))\big\}\big)^\top \mathscr{I}(x)\big(y-x-\frac{h}{2}\big\{\mathscr{I}^{-1}(x)(-\Sigma^{-1}x+\kappa(x))\big\}\big)\\
    =&\frac{1}{2h}y^\top \mathscr{I}(x) y+\frac{1}{2h}x^\top \mathscr{I}(x) x+\frac{h}{8}x^\top \Sigma^{-1}\mathscr{I}^{-1}(x)\Sigma^{-1}x+\frac{h}{8}\kappa(x)^\top \mathscr{I}^{-1}(x) \kappa(x) -\frac{1}{h}y^\top \mathscr{I}(x) x\\&+\frac{1}{2}y^\top \Sigma^{-1} x-\frac{1}{2}y^\top \kappa(x)-\frac{1}{2}x^\top \Sigma^{-1}x+\frac{1}{2}x^\top \kappa(x)-\frac{h}{4}x^\top\Sigma^{-1}\mathscr{I}^{-1}(x)\kappa(x).
\end{align*}
Let
\begin{align*}
 L'_1& = h(x^\top\Sigma^{-1}\mathscr{I}^{-1}(x) \Sigma^{-1}x-y^\top\Sigma^{-1}\mathscr{I}^{-1}(y) \Sigma^{-1}y)/8,\\ 
 L'_2&=h(\kappa(x)^\top \mathscr{I}^{-1}(x) \kappa(x)-\kappa(y)^\top \mathscr{I}^{-1}(y)\kappa(y))/8 + \log(|\mathscr{I}(y)|/|\mathscr{I}(x)|)/2,\\
 L'_3&=\frac{1}{2}x^\top(I-(h/2)\Sigma^{-1}\mathscr{I}^{-1}(x))\kappa(x)-\frac{1}{2}y^\top(I-(h/2)\Sigma^{-1}\mathscr{I}^{-1}(y))\kappa(y) +\frac{1}{2}x^\top \kappa(y)-\frac{1}{2}y^\top \kappa(x),\\
 L'_5 &= ([\{x^\top \mathscr{I}(x) x-x^\top \mathscr{I}(y) x\}+\{y^\top \mathscr{I}(x) y-y^\top \mathscr{I}(y)y\}]/2 -[x^\top \mathscr{I}(x) y-x^\top \mathscr{I}(y) y])/h.
\end{align*}
Note that $\alpha(x,y)=1\wedge \exp\{L'_1+L'_2+L'_3+L_4+L'_5\}$.  Define, $B^4_{k}(x) \equiv \{y:\|y- c(x)\|\leq k \; \& \; \prod_{i=1}^m x^{(i)}(x^{(i)} - y^{(i)}) >0\}$.  
We can find $\varepsilon>0$ and $k_\varepsilon$ 
such that
$\int_{R^m/B^4_{k_\varepsilon}(x)}q(x,y)dy<\varepsilon$. Thus, from \eqref{eq:ibd} and \eqref{eq:mmalamean}, for $y \in B_\varepsilon^4(x)$, we have
\begin{align}
  \label{eq:lonepr}
   &  y^\top \Sigma^{-1}\mathscr{I}^{-1}(y) \Sigma^{-1}y\\=&\big(\big(I-(h/2)\mathscr{I}^{-1}(x)\Sigma^{-1}\big)x+O(1)\big)^\top \Sigma^{-1}\mathscr{I}^{-1}(y) \Sigma^{-1} \big(\big(I-(h/2)\mathscr{I}^{-1} (x) \Sigma^{-1}\big)x+O(1)\big)\nonumber\\
     =&x^\top \Sigma^{-1}\mathscr{I}^{-1}(y) \Sigma^{-1}x+(h^2/4)x^\top\Sigma^{-1}\mathscr{I}^{-1}(x)\Sigma^{-1}\mathscr{I}^{-1}(y) \Sigma^{-1} \mathscr{I}^{-1}(x)\Sigma^{-1}x\nonumber\\ & -hx^\top\Sigma^{-1}\mathscr{I}^{-1}(y) \Sigma^{-1} \mathscr{I}^{-1}(x)\Sigma^{-1}x+
     o(\|x\|^2).
 \end{align}
 Let $\iota (x^{(i)}) = e^{x^{(i)}}/(1+e^{x^{(i)}})$, $\omega(x^{(i)})= \{\iota (x^{(i)}) - \iota^2(x^{(i)})\}$, and
 $E(x) = \mbox{diag} (\ell_1\omega(x^{(1)}), \dots,\\ \ell_m\omega(x^{(m)})).$
 Note that
 $\omega'(t) = \iota(t)- 3\iota^2(t)+ 2\iota^3(t) =\iota(t)(1-
 \iota(t))(1- 2\iota(t))\gtreqless 0 \iff t \lesseqgtr 0$. That is,
 $\omega(t)$ is decreasing (increasing) on the positive (negative)
 half line. So, for $y \in B_\varepsilon^4(x)$, $ E(y) \ge E(x)$, implying
 $\mathscr{I}^{-1}(x) = (E(x) + \Sigma^{-1})^{-1} \ge (E(y) +
 \Sigma^{-1})^{-1} = \mathscr{I}^{-1}(y)$. So,
 $ x^\top \Sigma^{-1}\mathscr{I}^{-1}(x) \Sigma^{-1}x - x^\top
 \Sigma^{-1}\mathscr{I}^{-1}(y) \Sigma^{-1}x \ge 0$.
 Thus, from \eqref{eq:lonepr} we have
 \begin{align*}
   L'_1\ge&\frac{h}{8}(-h^2x^\top\Sigma^{-1}\mathscr{I}^{-1}(x)\Sigma^{-1}\mathscr{I}^{-1}(y) \Sigma^{-1} \mathscr{I}^{-1}(x)\Sigma^{-1}x/4 + hx^\top\Sigma^{-1}\mathscr{I}^{-1}(y) \Sigma^{-1} \mathscr{I}^{-1}(x)\Sigma^{-1}x) +
     o(\|x\|^2)\\
   \geq &\frac{h}{8}(h\psi_+^3 \zeta_{1 +}^{2}\|x\|^2-h^2\psi^{+4}\zeta_{2}^{+3}\|x\|^2/4)+
     o(\|x\|^2).
 \end{align*}
 Recall that, $\zeta_{i+}$ and $\zeta_{i}^{+}$ are the smallest and the largest
 eigenvalue of $G_i$, respectively for $i=1,2$. So, if
 $h\in (0, 4\psi_+^3 \zeta_{1 +}^{2}/[\psi^{+4}\zeta_{2}^{+3}]),$ when
 $\|x\|\rightarrow\infty$, $L'_1\rightarrow\infty$. 
 Also, $L'_2$ is bounded, and $L'_3 = o(L'_1)$.
Next, we consider $L'_5$. Note that $\mathscr{I}_{ij}(x)-\mathscr{I}_{ij}(y) = 0$ for $i \neq j$. Thus,
 \begin{align*}
   x^\top \mathscr{I}(x) x-x^\top \mathscr{I}(y) x = \sum_{i=1}^m (\mathscr{I}_{ii}(x)-\mathscr{I}_{ii}(y)) x^{(i)2} = \sum_{i=1}^m \ell_i [\omega(x^{(i)}) -\omega(y^{(i)})]x^{(i)2},
 \end{align*}
 and
 \[
   L'_5 = \sum_{i=1}^m \ell_i [\omega(x^{(i)})- \omega(y^{(i)})](x^{(i)} - y^{(i)})^2/2.
 \]
 By \eqref{eq:ibd} and \eqref{eq:mmalamean}, as
 $y \in B_\varepsilon^4(x)$, for small $h$ we have
 $\sum_{i=1}^m (x^{(i)} - y^{(i)})^2 = O(1)$. Since $\iota (t) \in (0, 1)$, for $y \in B_\varepsilon^4(x)$ we have $L'_5 = O(1)$. 
 Then, using similar
 arguments as in the proof of geometric ergodicity for the PCMALA chain, we can show that A3 holds for the
 MMALA chain. Finally, for the MMALA
 \[
    \|x\|-\|c(x)\|\geq \|x\|-\|(I-(h/2)\mathscr{I}^{-1}(x)\Sigma^{-1})x\|-\|(h/2)\mathscr{I}^{-1}(x) \kappa(x)\|.
\]
If
$ h<4\psi_+ \zeta_{1+}/\psi^{+2}\zeta_2^{+2}$, then
$\|(I-(h/2)\mathscr{I}^{-1}(x) \Sigma^{-1})x\|^2/x^\top x <1$. Since $\zeta_{1+}/\zeta_2^{+}<1$,  if
$h\in (0, 4\psi_+^3 \zeta_{1 +}^{2}/[\psi^{+4}\zeta_{2}^{+3}]),$ then A4 holds for the MMALA
chain. Thus geometric ergodicity of the MMALA follows from Theorem~\ref{thm:gmalage}.

Next, we consider the PCULA chain. From \eqref{eq:spmean} we know that the PCULA chain is given by
   \[
     X_n=(I-(h/2)G\Sigma^{-1})X_{n-1}+b(X_{n-1}) + \sqrt{h}G^{1/2}\epsilon, 
  \]
  where $b(x)=z-\ell\cdot(e^x/[1+e^x])+\Sigma^{-1}D\beta$. So, if $h$
  is chosen such that the singular values of $(I-(h/2)G\Sigma^{-1})$
  are less than one, then geometric ergodicity of the PCULA follows from
  \eqref{eq:pula_extend} and Remark~\ref{rem:pculasing}.
\end{proof}
\begin{proof}[Proof of Proposition~\ref{thm:pcmalapois}]
 Since
 $\lim_{\|x\|\rightarrow\infty}\prod_{i=1}^m
 \exp\{z_ix^{(i)}\}\exp\{-\exp\{x^{(i)}\}\}=0,$ if $z_i >0 \; \forall i$, and
 otherwise is bounded, it follows that the target density $f(x)$ is
 bounded. From \eqref{eq:poisder} we have
 \begin{equation}
   \label{eq:poisgl}
   \|\nabla\log f(x)\| = \|z-\exp\{x\}-\Sigma^{-1}(x-D\beta)\| \ge \|\exp\{x\}\|-\|z\|-\|\Sigma^{-1}(x-D\beta)\|.
 \end{equation}
 Next, for $x^{(i)} >0, i=1,\dots,m$, from \eqref{eq:poisgl} we have
 \begin{align*}
   \liminf_{\|x\| \rightarrow\infty}   \|G\nabla\log f(x)\|/\|x\| &\ge \liminf_{\|x\|\rightarrow\infty} \zeta_{+}\{1-(\|z\|+\|\Sigma^{-1}(x-D\beta)\|)/\|\exp\{x\} \|\}\|\exp\{x\}\|/\|x\|\\& = \infty.
 \end{align*}
Hence, the result follows from Theorem~\ref{thm:gmalanclt}.
\end{proof}

\section{Additional numerical results for the SGLMMs}
\label{sec:apppois}
In this section, we include some tables and figures from the analysis of simulated data from the SGLMMs.

\vspace*{-.1in}

\begin{table}[H]
\caption{ESS values for the MH chains for the Poisson SGLMM with the log link}
\centering
\begin{tabular}{c c c c  c} 
 \hline\hline
Algorithm & $G$ matrix &ESS(1, 175, 350) & ESS/min& mESS \\  [0.5ex]
 \hline\hline
  \multirow{4}{4em}{RWM}& $I$ & ( 8,19,36 ) &( 0.03,0.07,0.13 )  &  1,052 \\ 
& $\Sigma$ & ( 12,13,20 ) &( 0.05,0.05,0.08 ) &  1,052 \\
& diag $\hat{\mathcal{I}}^{-1}$ & ( 16,25,20 ) & ( 0.06,0.10,0.08 ) &  1,045\\
& $\hat{\mathcal{I}}^{-1}$ & ( 14,19,9 ) & ( 0.06,0.08,0.04 ) & 1,053 \\
\hline
\multirow{4}{4em}{PCMALA}& I & ( 6,6,8 ) & ( 0.02,0.03,0.03 ) &  1,026 \\ 
& $\Sigma$&( 7,6,8 ) &( 0.03,0.03,0.03 ) &  1,034\\
& diag $\hat{\mathcal{I}}^{-1}$ &( 37,37,25 )&( 0.17,0.17,0.12 )& 1,055\\
& $\hat{\mathcal{I}}^{-1}$ &( 7,764,8,667,8,138 ) &( 36.95,41.25,38.73 ) &   12,535\\
PMALA&  & ( 132,226,160 ) &( 0.46,0.78,0.55 )  &  1,168\\
\hline\hline
\end{tabular}
\label{table:ess.pl}
\end{table}


\begin{table}[H]
\caption{MSJD values for the MH chains for the Poisson SGLMM with the log link}
\centering
\begin{tabular}{c c c c c c c c c} 
 \hline\hline
RWM1 &RWM2 &RWM3 & RWM4&PCMALA1&PCMALA2&PCMALA3&PCMALA4&PMALA\\  
 \hline\hline
  0.018&0.023 &0.032&0.013&4.52e-05& 5.18e-09&0.049&11.70& 0.222\\
\hline\hline
\end{tabular}
\label{table:msejd.pl}
\end{table}

\begin{figure*}[t]
  \includegraphics[width=\linewidth]{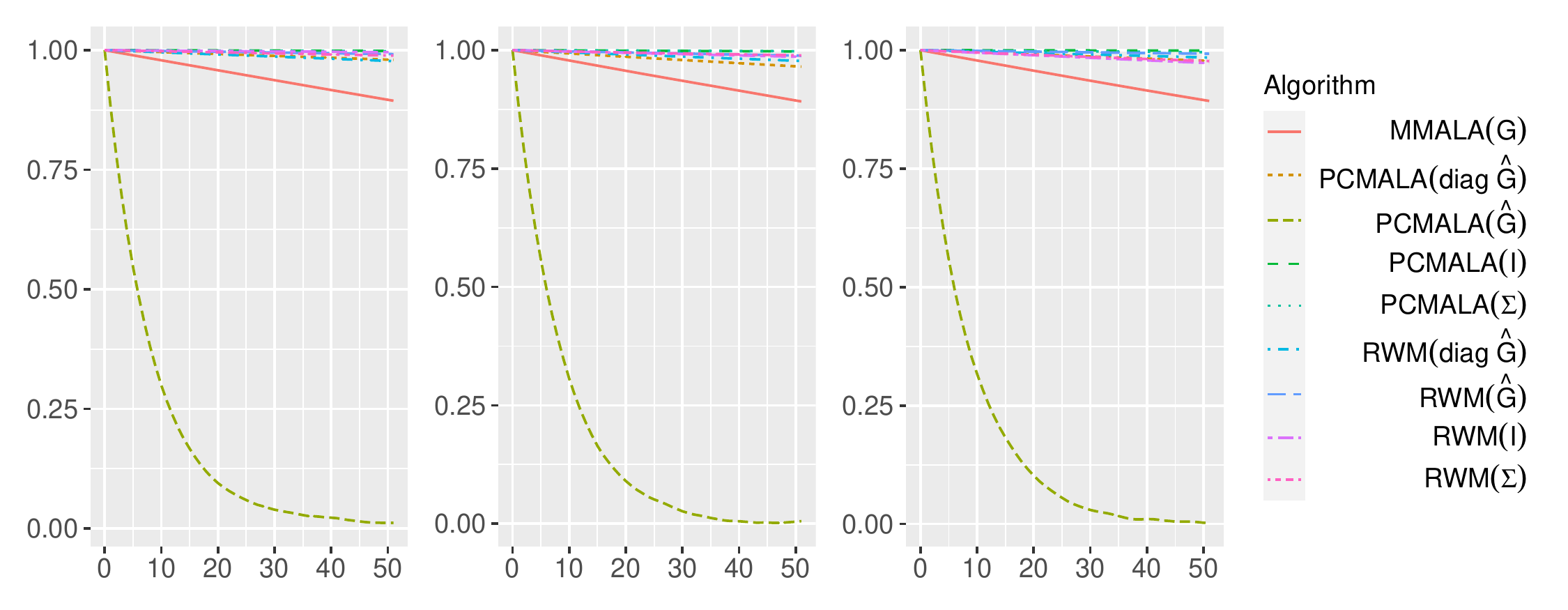}
  \caption{ACF plots for $x^{(1)}$ (left panel), $x^{(175)}$ (center
    panel), and $x^{(350)}$ (right panel) for the MH chains for the Poisson SGLMM with the log
    link. In the legend, $G$ refers to $\mathscr{I}^{-1}$ and $\hat{G}$ refers to $\hat{\mathscr{I}}^{-1}$.}
\label{fig:acf.poissonlog}
\end{figure*}

\begin{figure*}[b]
    \begin{minipage}[b]{0.24\linewidth}
    \includegraphics[width=\linewidth]{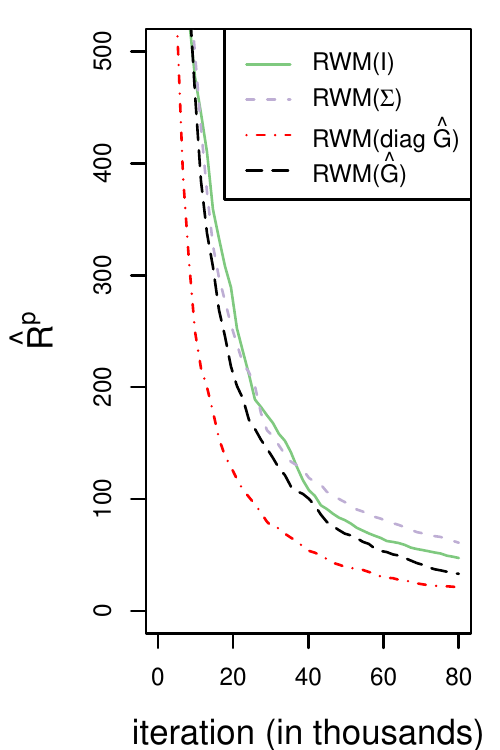}
  \end{minipage}
  \begin{minipage}[b]{0.24\linewidth}
    \includegraphics[width=\linewidth]{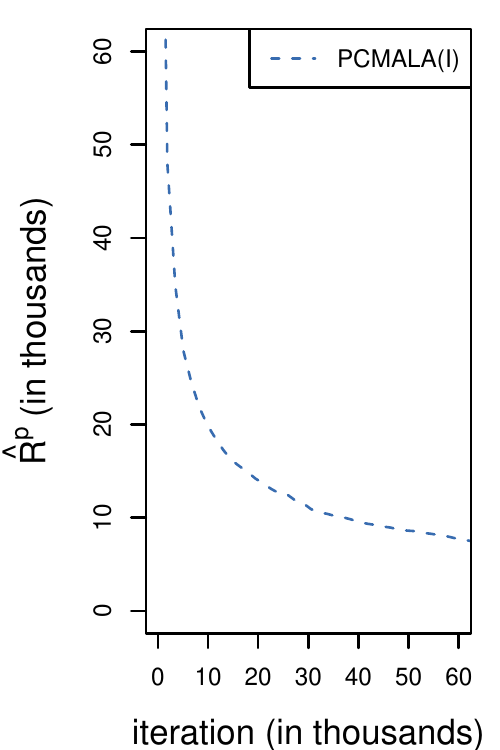}
  \end{minipage}
  \begin{minipage}[b]{0.24\linewidth}
    \includegraphics[width=\linewidth]{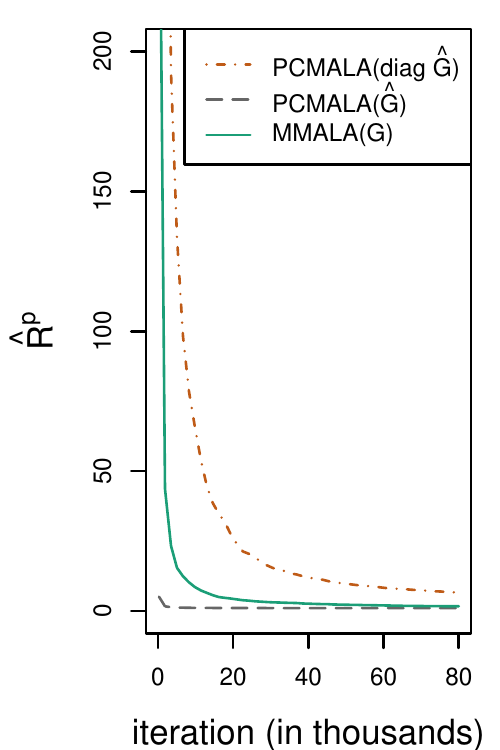}
  \end{minipage}
  \begin{minipage}[t]{0.24\linewidth}
    \includegraphics[width=\linewidth]{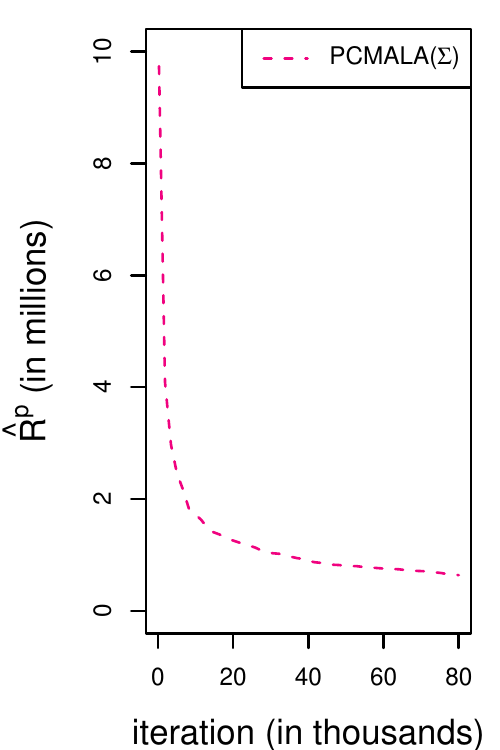}
  \end{minipage}
 \caption{Gelman and Rubin's $\hat{R}_p$ plot from the five parallel MH chains for the Poisson SGLMM with the log link. In the legend, $G$ refers to $\mathscr{I}^{-1}$ and $\hat{G}$ refers to $\hat{\mathscr{I}}^{-1}$.}
\label{fig:gelmandiag.poissonlog}
\end{figure*}

\begin{table}[h]
\caption{MSJD values for the PCULA chains for the binomial and Poisson SGLMMs}
\centering
\begin{tabular}{c c c c c| c c c c}
  \hline\hline
    &\multicolumn{4}{c}{binomial}& \multicolumn{4}{c}{Poisson}\\
  \hline
$G$ matrix &$I$ &$\Sigma$ &diag $(\hat{\mathscr{I}}^{-1})$ & $\hat{\mathscr{I}}^{-1}$& $I$ &$\Sigma$ &diag $(\hat{\mathscr{I}}^{-1})$ & $\hat{\mathscr{I}}^{-1}$ \\  
 \hline
 & 4.18&0.07 &7.42&7.41&0.05&0.05 &0.05&126.74\\
\hline\hline
\end{tabular}
\label{table:msejd.PCULA.bl}
\end{table}

\begin{figure*}[h]
  \includegraphics[width=\linewidth]{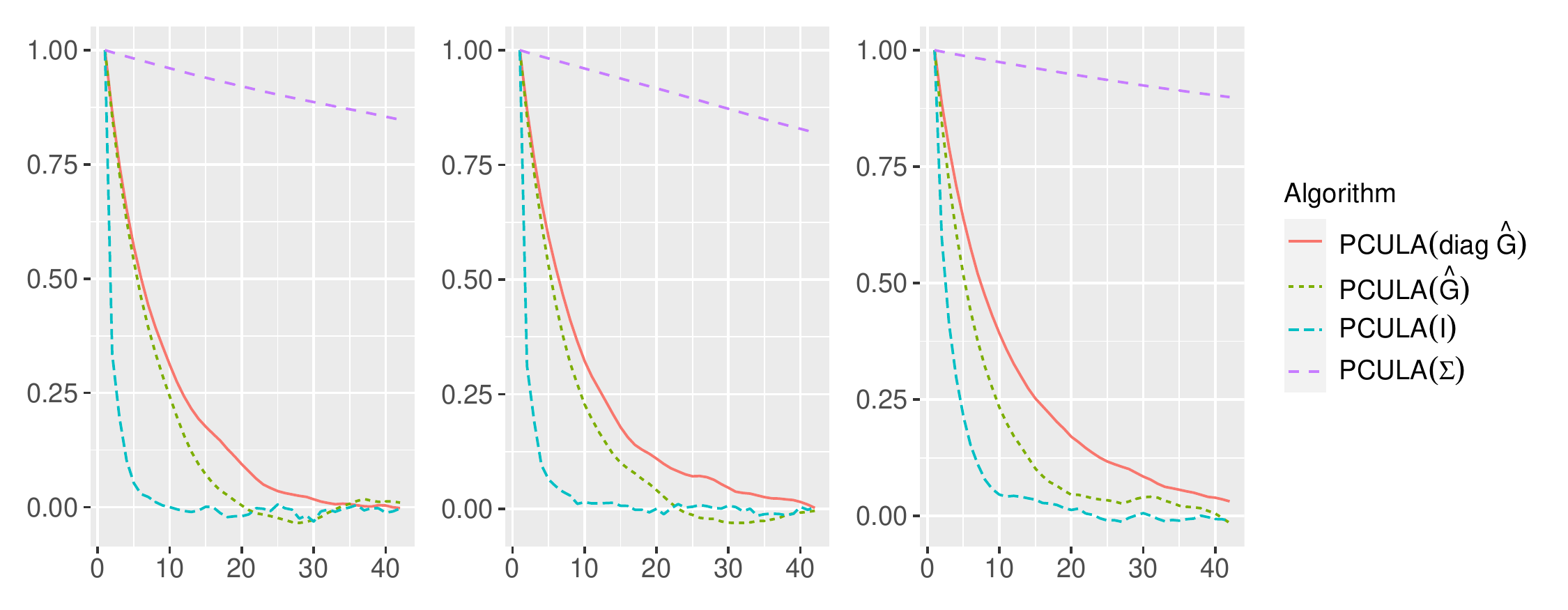}
  \caption{ACF plots for $x^{(1)}$ (left panel), $x^{(175)}$ (center
    panel), and $x^{(350)}$ (right panel) for the PCULA chains for the binomial SGLMM with the logit
    link. In the legend, $\hat{G}$ refers to $\hat{\mathscr{I}}^{-1}$.}
\label{fig:acf.PCULA.binomlogit}
\end{figure*}

\begin{figure*}[h]
  \includegraphics[width=\linewidth]{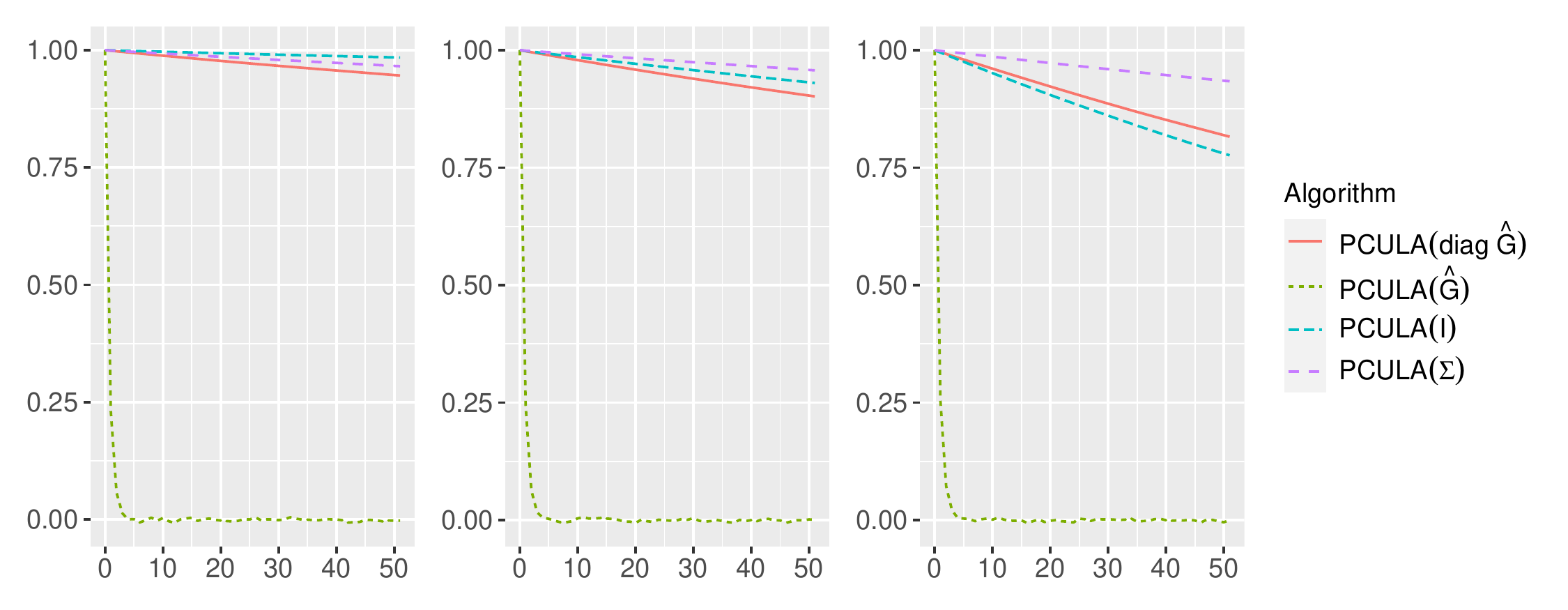}
  \caption{ACF plots for $x^{(1)}$ (left panel), $x^{(175)}$ (center
    panel), and $x^{(350)}$ (right panel) for the PCULA chains for the Poisson SGLMM with the log
    link. In the legend, $\hat{G}$ refers to $\hat{\mathscr{I}}^{-1}$.}
\label{fig:acf.PCULA.poissonlog}
\end{figure*}

\begin{figure}[h]
  \begin{center}
      \includegraphics[width=\linewidth]{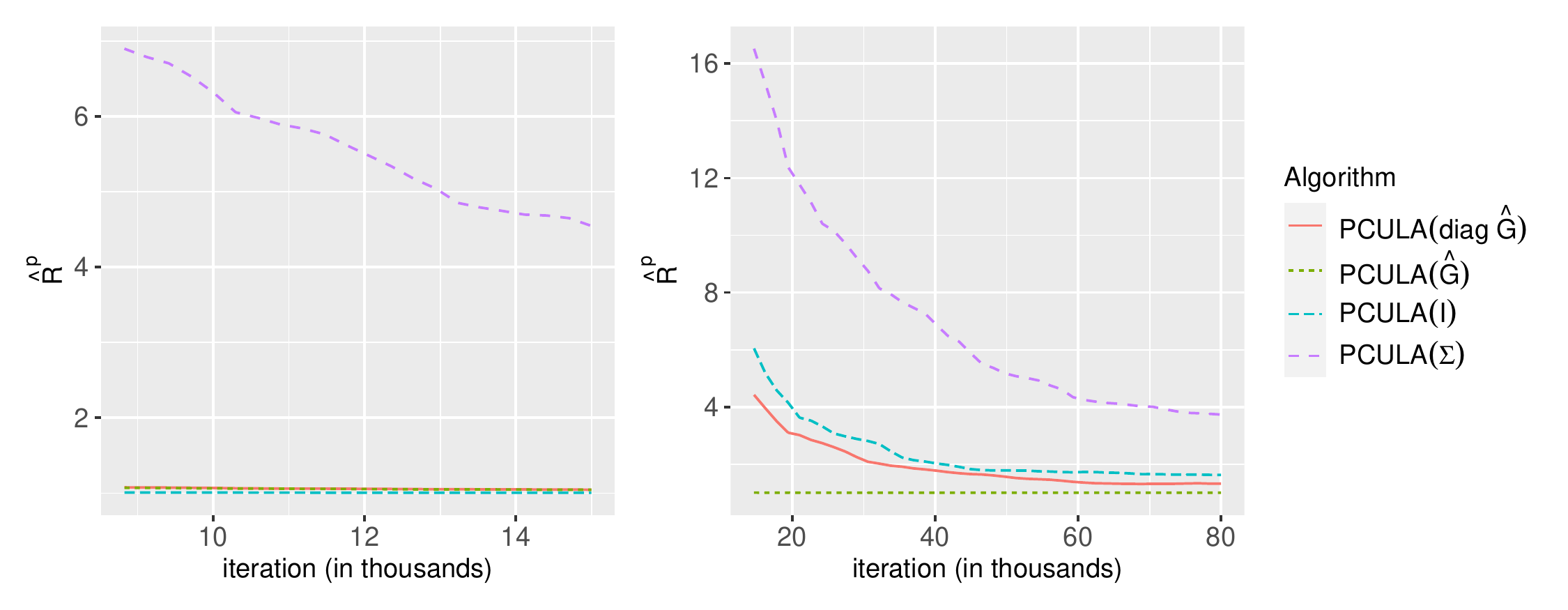}
  \end{center}
  \caption{Gelman and Rubin's $\hat{R}_p$ plot from the five parallel PCULA chains for the binomial-logit (left) and the Poisson-log SGLMMs (right). In the legend, $\hat{G}$ refers to $\hat{\mathscr{I}}^{-1}$.}
 \label{fig:gelmandiag.pcula}
 \end{figure}

\end{document}